\documentclass[letterpaper,twocolumn,10pt]{article}
\usepackage{usenix-2020-09}

\usepackage{tikz}
\usepackage{amsmath}

\usepackage{xspace}
\usepackage{multirow}
\usepackage{subfig}

\usepackage{algorithmicx}
\usepackage[ruled,vlined]{algorithm2e} 
\usepackage{algpseudocode}
\usepackage{algcompatible}

\usepackage{algpascal}
\usepackage{algc}
\usepackage{amsthm}
\usepackage{amsfonts}

\usepackage{mathrsfs}
\usepackage{todonotes}
\usepackage{url}

\usepackage{enumitem}

\usepackage{thmtools}

\usepackage{authblk}

\makeatletter
\def\thm@space@setup{%
  \thm@preskip=1pt
  \thm@postskip=\thm@preskip
}
\makeatother

\usepackage{tablefootnote}

\newcommand{\sys}{\textsf{FlexTender}\xspace}

\newcommand{\msgtag}[1]{\textsc{{#1}}}

\newcommand{\msg}[2]{\ensuremath{\langle\msgtag{#1},#2\rangle}}

\newcommand{\reviewed}{examined\xspace}

\newcommand{\eov}{\textsc{eov-s}im\xspace}

\newcommand{\propose}{\textsc{propose}}
\newcommand{\proposal}{\textsc{proposal}}
\newcommand{\prevote}{\textsc{prevote}}
\newcommand{\precommit}{\textsc{precommit}}

\newcommand{\refround}{refRound_p}
\newcommand{\txs}{refValue_p}

\definecolor{mylightgray}{gray}{0.8}
\definecolor{mydarkgray}{gray}{0.6}
\definecolor{offline}{gray}{0.7}
\newcommand{\algemph}[1]{\colorbox{mylightgray}{#1}}

\newtheorem{dilemma}{Dilemma}
\newtheorem{requirement}{Requirement}

\newtheorem{definition}{Definition}
\newtheorem{lemma}{Lemma}
\newtheorem{theorem}{Theorem}

\usepackage{tikz}

\makeatletter
\def\@copyrightspace{\relax}
\makeatother

\algrenewcommand\algorithmicindent{2.0em}

\newcommand{\myindent}{\hspace{\algorithmicindent}}

\usepackage{lipsum}

\begin{document}

\title{Back to the Future: Rethinking Endorsement in Order-Execute Blockchains} 

\author[1,2]{Rongji Huang}
\author[1]{Yifeng Ye}
\author[3]{Gerui Wang}
\author[3]{Mingchao Wan}
\author[3]{Yuxing Duan}
\author[4]{Jingjing Zhang}
\author[1]{Guangtao Xue}
\author[1]{Shengyun Liu}
\affil[1]{\it Shanghai Jiao Tong University}
\affil[2]{\it Shanghai Academy of Future Internet Technology} 
\affil[3]{\it Beijing Academy of Blockchain and Edge Computing}
\affil[4]{\it Fudan University}

\maketitle

\begin{abstract}
Due to regulatory compliance and governance management, modern (permissioned) blockchains require flexible endorsement, which allows the endorsement policy for each contract or state object to be individually defined.
To enable flexible endorsement, Hyperledger Fabric employs an \emph{execute-order-validate} (EOV) paradigm, in which transactions first undergo speculative execution and endorsement, and are only then ordered and validated.
Meanwhile, most blockchain systems, including the platform targeted in this work (i.e., ChainMaker), still follow a conflict-free \emph{order-execute} framework.
We argue that the EOV paradigm still faces several limitations, notably high abort rates in high-contention workloads such as those in Decentralized Finance (DeFi).

To avoid refactoring our system and better suit DeFi applications,
we try to integrate flexible endorsement into the classical order-execute architecture and accordingly propose a new framework.
The key challenge is to deterministically remove problematic transactions from an ordered list, while preserving censorship resistance and decentralization for the remaining ones.
We instantiate this framework on top of Tendermint, a seminal Byzantine fault-tolerant (BFT) protocol adopted in our system, and thereby propose \sys.
By elegantly embedding endorsements into consensus, \sys incurs no additional messaging overhead in the normal case.
Empirical evaluation using an Ethereum USDT workload demonstrates that \sys achieves up to $10.6\times$ speedup in throughput over an EOV simulation on the same platform.

\end{abstract}

\vspace{-5pt}
\section{Introduction}
\vspace{-5pt}

Decentralization is fundamental to the prosperity of blockchain ecosystems~\cite{uniswap,backed,usdt}.
However, this advantage comes with an inevitable cost: the technology can be easily exploited for illicit activities~\cite{criminal,laundering,illicitreport}.
Even without malicious intent, accidents~\cite{paxosaccident,tetheraccident,blockfiaccident} can also erode the trust that underpins blockchain systems.

With the recent establishment of regulatory frameworks~(e.g., the U.S. GENIUS Act~\cite{genius} and the Hong Kong Stablecoins Ordinance~\cite{ordinance}), it has become more evident that regulation is essential for the widespread adoption of decentralized systems.
In this work, we aim to integrate proactive regulation into ChainMaker~\cite{chainmaker}, an enterprise-grade blockchain platform developed in part by some of the authors and rolled out across dozens of state-owned enterprises in China.
Toward this goal, we found that a key step is to support heterogeneous roles in the \emph{validation} of specific transactions and their execution results.
A similar idea, also referred to as \emph{flexible endorsement}, was first introduced in Hyperledger Fabric~\cite{fabric}~(hereafter, Fabric).

To support customizable endorsement policies and overcome limitations inherent in the traditional state machine replication~(SMR) paradigm~\cite{smr}, Fabric adopts a novel \emph{execute-order-validate} (EOV) framework inspired by Eve~\cite{eve}.
In Fabric, transactions are first simulated (i.e., pre-executed) in parallel by endorsers\footnote{In Fabric, endorsers are nodes that maintain local copies of the ledger and are responsible for executing and endorsing transactions.}, producing read/write sets.
Compliant transactions and their execution results are then endorsed (through signatures).
Based on predefined endorsement policies, an ordering service sequences properly endorsed transactions, typically via a consensus protocol.
As transactions were pre-executed in parallel, a final validation phase checks for read-write conflicts and aborts conflicting transactions~\cite{linearabort}.
To be committed, aborted transactions must be retried.

Despite Fabric's appealing features and proven enterprise success, the classical order-execute framework continues to dominate blockchain systems~\cite{cosmos,ResilientDB,mytumbler,bitcoin,ethereum,algorand}.
While workarounds exist 
for most limitations addressed by Fabric, supporting flexible endorsement presents inherent difficulties.
Most notably, as endorsement policies may implement arbitrary logic, whether a transaction can be properly endorsed goes beyond a consensus problem.
\emph{A key benefit} of the EOV paradigm is its simplicity in handling endorsement,
as any transaction remains off the critical path of others until getting endorsed\footnote{This is likely the main reason why Fabric adopted an ``inverted'' structure.}.
Nonetheless, placing endorsement ahead still entails several limitations, rendering it unsuitable for modern order-execute blockchains.
\begin{itemize}[topsep=-0.1em,itemsep=-0.1em,itemsep=-0.5em]
\item 
Fabric inverts the order-execute model and explicitly relies on a validation phase to ensure serializability;
Its architectural idea does not transfer readily to order-execute blockchains, as it affects not only consensus but also execution semantics;
\item 
As there is no sequencing in the execution phase, 
each transaction is endorsed individually, which hinders batching---a key technique for amortizing the cost of cryptographic operations~\cite{Basil} for endorsement; and,
\item 
The EOV framework is ill-suited for high-contention workloads, which are prevalent in financial applications like decentralized exchanges (DEX)~\cite{uniswap};
The problem not only degrades performance but may undermine liveness when certain transactions accessing hot-spot state are repeatedly aborted~(see Figure~\ref{fig:conflict}).
\end{itemize}

Numerous improvements~\cite{xox,understanding,ConChain,fabricCRDT,frabirsharp,fastFabric,fabric++,htfabric,sparsepeer} have been made to Fabric, yet they remain constrained by the original framework and inherited limitations.
We instead pivot toward the alternative paradigm.

We observe that the key to enabling certain participants to play critical roles in validating specific transactions is \emph{to keep the rest of the system decentralized and as unaffected as possible}\footnote{In principle, endorsement policies should themselves be (partially) decentralized, although this is ultimately the responsibility of application designers.};
Otherwise, one might simply adopt a centralized approach.
Targeting predominant order-execute blockchains,
we study a framework termed \emph{order-execute-endorse}, in which endorsement occurs only after transactions have been ordered and executed.
The rationale behind our design is simple: 
Rather than simplifying endorsement handling at the cost of refactoring and introducing conflicts, our framework aligns with most existing systems and remains conflict-free.

\noindent\textbf{Challenges.}
We however must deal with the cases where some transactions fail to obtain endorsements.
Since such problematic transactions may affect the outcomes of subsequent ones, 
nodes must \emph{remove} or eventually commit them in a deterministic manner.
More importantly, subsequent transactions should be re-executed and re-endorsed.
The cycle of removal and re-execution may repeat iteratively until all remaining transactions are endorsed.
As transactions are removable, we should also address the issue of endorsable transactions being deliberately removed by malicious nodes~(i.e., censorship attacks).
It is worth noting that early versions~\cite{structuredive} of Fabric (i.e., v0.6~\cite{fabric0.6} and earlier) were built on the order-execute paradigm, yet they lacked support for flexible endorsement.
This architectural shift highlights the difficulty of introducing flexible endorsement into an order-execute blockchain.

We initially attempted to introduce an endorsement phase as a standalone component, similar to Fabric's modular architecture, but found the resulting design too complicated for practical implementation and real-world deployment.
As transaction removal and re-execution may be performed iteratively, such a modular design gives rise to nested rounds of processing: one for consensus (e.g., view changes) and the other for endorsement.
We observe that the need to remove transactions and reprocess the remaining ones bears similarities with the need to advance consensus in a round-by-round fashion.
We thus try to elegantly integrate flexible endorsement into consensus messages.
To this end, 
\begin{itemize}[topsep=-0.1em,itemsep=-0.1em,itemsep=-0.5em]
\item The round (or view) advancement not merely facilitates leader election, but also aligns with the re-execution and re-endorsement cycle of flexible endorsement;
\item The (multi-phase) message exchange in each round not merely reaches agreement on transactions and their results, but also propagates endorsements and achieves consensus on the status of each transaction (whether endorsed or removed); and,
\item Timeout events not merely trigger leader elections, but also prompt the removal of unendorsed transactions.
\end{itemize}

We incorporate these ideas into Tendermint~\cite{tendermint}, whose message pattern derives from PBFT~\cite{pbft} and serves as the foundation for numerous BFT protocols~\cite{hq,upright,bftsmart,700bft,xft,sbft,hotstuff,mir,neobft,chitu}. 
Tendermint is adopted by many blockchain systems, including BNB Beacon Chain~\cite{bnb-tendermint}, Cosmos~\cite{cosmos} and ChainMaker.
We name our new protocol \sys.
To the best of our knowledge, \sys is the first BFT protocol that supports flexible endorsement while remaining conflict-free.
\sys also provides several key advantages.
\begin{itemize}[topsep=-0.1em,itemsep=-0.1em,itemsep=-0.5em]
\item The message exchange in each round does not exceed Tendermint's normal procedure.
Moreover, under a correct leader, with all transactions endorsable and synchronous network, \sys can commit a proposal (also) in three message delays.
\item \sys provides a fast path that can rapidly remove an unendorsed transaction if there is a proof that the transaction has been vetoed by its endorsers.
\item The correctness of \sys relies largely on Tendermint, which has been thoroughly analyzed~\cite{tendermintproof,uctendermint} and widely adopted.
\end{itemize}
For the above two cases, \sys also guarantees responsiveness~\cite{hotstuff}, meaning that if a sufficient number of endorsers can progress at the pace of the actual network, the entire protocol can do so as well.

We implement and evaluate \sys on top of ChainMaker, demonstrating its practical viability under production constraints, and conduct experiments on Amazon EC2.
We also simulate the EOV paradigm (\eov) based on the same codebase.
We use transfers from the Tether USD (USDT) contract~\cite{usdt} on Ethereum as our workload.
USDT is the largest stablecoin by market capitalization and plays a major role in global crypto markets.
The experiments demonstrate that compared to \eov, \sys achieves up to $10.6\times$ speedup in throughput.
We provide the correctness proof and pseudocode in Appendix~\ref{sec:correctness} and~\ref{sec:pseudocode}, respectively, where we also highlight our modifications to Tendermint.

\vspace{-5pt}
\section{Background}
\vspace{-5pt}

\subsection{System model}

We focus on the (permissioned) blockchain~\cite{fabric} or state machine replication (SMR)~\cite{smr} problem,
in which a group of $n$ nodes maintains a logically centralized ledger/database by replicating transactions and the world state across all nodes.
At most $f$ nodes can be \emph{Byzantine} faulty~\cite{byz} and exhibit arbitrary behavior, while the remaining $n-f$ nodes are assumed to be \emph{correct}.
Any blockchain solution must ensure both safety and liveness, meaning that correct nodes produce the same results when executing the same set of transactions, and that every transaction will be committed/executed eventually.

\noindent\textbf{Flexible endorsement.}
An endorsement policy specifies which nodes and how many of them must participate in the approval of a transaction and its execution results. 
Typically, an endorsement policy is associated with a contract or state object, meaning that any transaction invoking the contract or modifying the object must comply with the policy.
\begin{definition}
A transaction is considered \textbf{properly endorsed} when it has been endorsed by enough nodes specified in the corresponding endorsement policies;
A value or proposal is considered properly endorsed if all transactions within it are properly endorsed.
\end{definition}
Theoretically, in the absence of a specific policy, any set of $f+1$ nodes\footnote{The typical quorum of $n-f$ or $2f+1$ nodes is used to prevent equivocation in the ordering phase, but not for transaction validation~\cite{separating}.} 
can properly endorse a transaction, ensuring that at least one of them is correct.
We still use Fabric's monotone logical expression on sets to define each endorsement policy (see Figure~\ref{fig:structure} for an example).
Unlike Fabric, we also use this expression to verify whether a transaction is opposed by enough endorsers, in order to \emph{rapidly remove} a pending transaction.
When endorsers are not malicious, a transaction cannot be both properly endorsed and eligible for rapid removal.
Even if malicious endorsers express conflicting views, they should not compromise safety or liveness.
The protocol must additionally ensure the following theorems.
\begin{theorem}{(Safety with endorsement)}
\label{the:endorsafe}
Every committed transaction must be properly endorsed.
\end{theorem}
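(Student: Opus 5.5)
The argument is an invariant on what a correct node accepts, lifted to committed blocks through Tendermint's quorum intersection. The protocol itself is not yet described at this point, so I assume the design sketched in the introduction. Endorsements are carried inside consensus messages. A correct node sends a \precommit{} for a value $v$ in round $r$ only if it has locally checked that $v$ is properly endorsed: for every remaining transaction, the attached signatures satisfy the monotone policy expression. This check is made with respect to the execution results of the pre-state of $v$, after removing any transactions marked as removed. A block is committed only with $2f+1$ \precommit{} messages for the same $(v,r)$. The claim is then that every committed transaction lies in a value that some correct node verified to be properly endorsed.

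\textbf{Step 1: local invariant.} I will show that a correct node never sends a \precommit{} (and, if the design requires it, a \prevote) for a value that is not properly endorsed. Endorsement signatures are unforgeable and bound to the transaction together with its execution result, i.e.\ its read/write set over a specific pre-state. So the check cannot be fooled by Byzantine nodes fabricating endorsements from correct endorsers. Byzantine endorsers can only contribute their own signatures, and the policy already accounts for this: it counts how many of the specified endorsers signed, and any $f+1$ signers include a correct one in the default case. Rapid removal via veto proofs only deletes transactions, so it cannot turn an unendorsed transaction into a committed one.

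\textbf{Step 2: quorum argument.} A commit certificate contains $2f+1$ \precommit{} messages. Since at most $f$ nodes are faulty, at least $f+1$ of them come from correct nodes, and by Step 1 each of these verified $v$ as properly endorsed. Hence every transaction in the committed $v$ is properly endorsed, for the same execution results that are finally applied.

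\textbf{Main obstacle.} The hardest step is linking the endorsed results to the results actually executed after removals. When a transaction is removed in an earlier round, the later transactions are re-executed and their results change. Endorsements collected in earlier rounds for the old results must therefore not be reusable. I would handle this by binding each endorsement to the round, or to the prefix of retained transactions, together with the resulting read/write set. I would then argue by determinism of execution that every correct node computes the same results for the same retained list. Tendermint's locking rule (the locked value and round) then ensures that the value eventually committed is exactly the one carrying those endorsements.
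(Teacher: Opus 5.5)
Your structure matches the paper's: an invariant on what correct nodes precommit, lifted to commits through the $2f+1$ quorum. The paper argues it this way: a correct node locks $v$ only after collecting its endorsements, and a commit needs $2f+1$ nodes locked on $v$. However, your Step~1 invariant is false for \sys, and Step~2 fails with it. In \sys a correct node \emph{does} send a \precommit{} for a value $v$ that is not properly endorsed. Suppose $tmr_{V,r}$ expires after the node has seen $2f+1$ \prevote{} messages for $v$, or the node is on the rapid-removal path. It then precommits $v$ while suggesting the removal of an unendorsed transaction. This is exactly how a proposal becomes examined without being properly endorsed, which is the central distinction of the protocol. So $2f+1$ \precommit{} messages for $(v,r)$ certify nothing about endorsement. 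If your Steps~1--2 were valid, they would show that every examined value is properly endorsed, which the paper explicitly says need not hold. The parenthetical about \prevote{} cannot hold either, since \prevote{} messages are what carry the endorsements.

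The missing idea is the extra condition in the commit rule: $v$ is committed only on $2f+1$ \precommit{} messages for $v$ \emph{none of which suggests a removal}. Your invariant must therefore be about removal-free precommits: a correct node sends one only if $v$ is properly endorsed in its view. On the locking path this is the guard $\mathit{VerifyEndorsement}(r,v)\vee\mathit{VerifyEndorsement}(rr,v)$. On the timeout and rapid-removal paths, the exclusion set $\mathit{GetExcludedTx}(r,v)$ is empty only if $\mathit{VerifyEndorsement}(r,v)$ holds. Then at least $f+1$ of the $2f+1$ removal-free precommits come from correct nodes that checked signed endorsements on $hash(v)$, and the proof is complete.

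Your ``main obstacle'' is also simpler than you propose. Endorsements travel in \prevote{} messages on $hash(v)$, and $v$ includes the execution results, so a re-executed value is a different value that needs fresh endorsements. Binding endorsements to the round would actually be too strict. \sys intentionally lets a re-proposed valid value be locked and committed in a later round using endorsements generated in an earlier round $rr$.
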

\begin{theorem}{(Liveness with endorsement)}
\label{the:endorlive}
Every properly endorsed transaction that cannot be rapidly removed will be committed eventually.
\end{theorem}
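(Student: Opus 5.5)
The proof will track how a transaction can disappear from a proposal. In \sys, a transaction $tx$ that reaches a correct node's mempool may be removed from a proposal's ordered list for only two reasons. First, the \emph{fast path}: a veto proof is presented, i.e.\ a set of opposing endorsements satisfying the monotone policy expression. Second, the \emph{timeout path}: $tx$ fails to gather proper endorsements within the round's endorsement timeout, which triggers a round change. By hypothesis $tx$ cannot be rapidly removed, so no valid veto proof for it ever exists and the fast path never applies. The whole argument therefore reduces to showing that the timeout path cannot remove $tx$ forever. The proof will rest on Tendermint's liveness after GST, which the paper says it inherits: eventually there is a round with a correct proposer, all correct nodes are in that round, and the (increasing) timeouts exceed the actual message delays.

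\textbf{Step 1 (dissemination).} Since correct nodes gossip client transactions, $tx$ eventually reaches every correct node. Every correct proposer must include pending valid transactions; fairness or censorship resistance comes from the mempool rule that a correct proposer includes all pending transactions older than some bound. Hence $tx$ appears in every proposal made by a correct proposer from some round on. I must check that a locked or refValue re-proposal cannot exclude $tx$ indefinitely. Locks are released or replaced after GST, and a correct proposer's own fresh proposal (or its re-proposal extended with pending transactions) will eventually carry $tx$.

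\textbf{Step 2 (endorsement succeeds).} Consider a round $r$ after GST whose correct proposer's proposal contains $tx$. Then:
\begin{enumerate}
\item Correct nodes execute the proposal deterministically, because it is conflict-free order-execute.
\item Because $tx$ is properly endorsable, enough endorsers named in its policy sign $tx$ together with its result.
\item These signatures piggyback on \prevote/\precommit messages, as the paper says endorsements are embedded in consensus messages.
\end{enumerate}
The increasing timeouts then let the endorsements arrive before the removal timeout fires. So in round $r$ correct nodes see $tx$ properly endorsed and do not remove it. Other transactions in the same proposal may still be removed, via veto proofs or timeouts. In that case subsequent transactions are re-executed in the next round and $tx$ must be re-endorsed. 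I will argue that removals are monotone, so the number of removable transactions in the proposal strictly decreases. Along a sequence of synchronous rounds with correct proposers this cycle therefore terminates, leaving a proposal in which every transaction, $tx$ included, is properly endorsed.

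\textbf{Step 3 (commit).} Once a fully endorsed proposal is proposed by a correct leader after GST, Tendermint's standard liveness argument (2f+1 prevotes and precommits) commits it. Theorem~\ref{the:endorsafe} guarantees that only properly endorsed transactions are committed, so $tx$ is committed.

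\textbf{Main obstacle.} I expect the hardest part to be Step 2's termination of the iterative removal and re-endorsement cycle under interleaved round changes. Byzantine proposers in intermediate rounds may insert new unendorsable transactions or push stale locked values. The first thing I would try is a bound that fixes a single good round. Its correct proposer builds on the pruned list agreed so far, and its proposal has only finitely many transactions. If the pruning state survives across rounds or heights, the removal cycle can be charged to rounds with correct proposers, of which there are infinitely many. A second concern is an adversarial endorser that delays its endorsement; this is handled by the fact that the policy needs only the endorsing quorum, which is assumed to be timely after GST.
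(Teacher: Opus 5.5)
\textbf{There is a genuine gap in Step 2, and it is exactly the point the paper's proof rests on.} You only show that $tx$ survives rounds led by a \emph{correct} proposer, and then charge the removal cycle to the infinitely many such rounds.

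Within a height, however, a proposal can only lose transactions, never regain them (VerifyReference demands $v\subset v'$). This also rules out the ``re-proposal extended with pending transactions'' in your Step 1. The good rounds are interleaved with rounds led by Byzantine proposers. If any single round after GST collects $f+1$ removal suggestions for $tx$, then $tx$ is gone from that block for good, and nothing in your argument excludes this.

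The dangerous Byzantine behaviour is not inserting unendorsable transactions, which the subset check forbids. It is timing manipulation: the proposer releases its proposal late, or equivocates. Then $tx$'s endorsers have already sent a $nil$ \prevote{} after $tmr_{P,r}$ expired, while other nodes still make the proposal examined and suggest removing $tx$ when $tmr_{V,r}$ fires. Your remark that the endorsing quorum is timely after GST does not help, because timely endorsers are useless once they have stopped voting in the round. Likewise, your Step 3 appeal to Theorem~\ref{the:endorsafe} only says committed transactions are endorsed, not that the committed block contains $tx$.

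\textbf{The missing ingredient is Invariant~\ref{inv:censorship}.} After GST, in \emph{every} round and whoever proposes, no correct node suggests removing an endorsable, non-vetoable $tx$. Since Extract and VerifyReference both require $f+1$ suggestions, $tx$ is then never removed. With this invariant the paper's proof is short:
\begin{enumerate}
\item A correct node holding $tx$ eventually proposes a fresh block containing it.
\item Invariant~\ref{inv:censorship} keeps $tx$ in every proposal derived from that block.
\item Theorem~\ref{the:termination} commits one of those proposals.
\end{enumerate}

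The invariant itself rests on three protocol rules your plan never uses:
\begin{enumerate}
\item Endorsers always broadcast their endorsements, even after a $nil$ \prevote{} or while locked; this is the \prevote{} with \texttt{CON} false.
\item A node that sees two different proposals in one round immediately sends a $nil$ \precommit{}.
\item $tmr_{V,r}$ lasts $2\Delta$ and starts only after $2f+1$ \prevote{}s.
\end{enumerate}
Let $t$ be the time of the first correct \prevote{}. Gossip delivers the proposal to all correct nodes by $t+\Delta$, and their endorsements reach everyone by $t+2\Delta$. Any timeout-triggered removal suggestion is sent after $t+2\Delta$, so it cannot concern $tx$. You need to establish this per-round, proposer-independent guarantee, rather than reasoning only about good rounds.
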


\noindent\textbf{Timing assumptions.}
We assume the underlying network is eventually synchronous; that is, after some unknown time, called the global stabilization time (GST), message delay between correct nodes is bounded by $\Delta$~\cite{partialsynchrony}.
The total number $n$ of nodes is thus at least $3f+1$.
In this work, synchrony assumptions serve two purposes: (1) to ensure the liveness inherited from vanilla Tendermint and (2) to provide censorship resistance through the timely propagation of endorsements and conflicting messages.
The safety property is independent of synchrony assumptions.

\vspace{-5pt}
\subsection{Existing frameworks}

Blockchain systems, either permissioned~\cite{corda,besu,mytumbler,fiscobcos} or permissionless~\cite{bitcoin,ethereum,algorand,sui}, widely adopt a classical order-execute architecture (see Figure~\ref{fig:frame0}).
This paradigm is typically embedded within a full-fledged state machine replication (SMR) protocol~\cite{smr}, in which a proposer or primary establishes an order that other nodes validate and agree upon, commonly along with execution results and ledger updates.
Although the order-execute paradigm retains its simplicity, it is arguably not suitable for (permissioned) blockchain contexts due to limitations such as non-determinism, deficiency of sequential execution, and lack of flexible trust model~\cite{fabric}.

Fabric~\cite{fabric} 
employs a new execute-order-validate (EOV) architecture to mitigate the above-mentioned limitations (see Figure~\ref{fig:frame1}).
Fabric requires endorsers to first speculatively execute them in parallel and endorse the results.
By placing execution and endorsement ahead, a subsequent ordering component (e.g., Raft~\cite{raft} or BFT-SMaRt~\cite{bftsmart}) can reach agreement also on execution results and state updates.
Such a design decision not only enables parallel processing and resolves non-determinism, but further makes possible
customizing endorsement policies, which is of the essence for modern (permissioned) blockchains.

\begin{figure}
    \centering
    \subfloat[The traditional state machine replication framework.]{\includegraphics[scale=0.45]{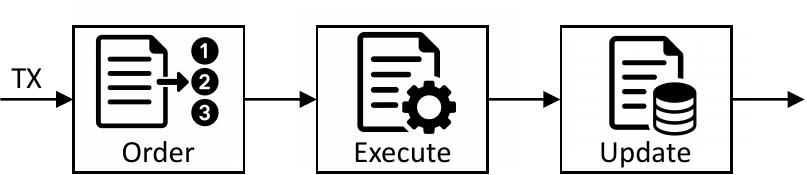}\label{fig:frame0}}\\
    \subfloat[The execute-order-validate framework (Fabric).]{\includegraphics[scale=0.45]{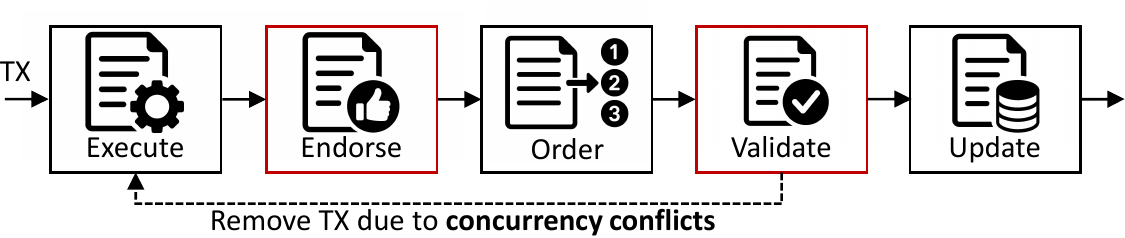}\label{fig:frame1}}
    \caption{Existing frameworks.}
    \label{fig:framework}
    \vspace{-15pt}
\end{figure}

Those benefits come with (possibly large) costs.
Most notably, as conflicting transactions may be executed in parallel,
Fabric further resorts to a validation phase to abort those that may break linearizability.
In unfavorable cases, a large fraction of conflicting transactions may be repeatedly aborted~\cite{vegeta}.
Figure~\ref{fig:txaccess} illustrates the access pattern of the USDT workload derived from Ethereum and used in our evaluation.
The workload exhibits high skewness; for example, the top $0.01\%$ of addresses are accessed by $39.23\%$ of all transactions.

After years of rapid advancement, it has become clear that many of the limitations inherent in order-execute blockchains can be effectively addressed or mitigated, enabling this paradigm to remain predominant.
For instance, to deal with non-determinism, many systems (e.g., Ethereum~\cite{ethereum} and ChainMaker) also require the consensus layer to reach agreement on execution results and updated state, thereby ensuring that no divergent state can arise.
To enable parallel processing, proposers can naturally exploit multi-thread~\cite{rex,deterschedule,Anjana,rolis} or even multi-worker~\cite{tusk} execution and have other nodes schedule execution in the same way.
Among the limitations addressed by Fabric, a notable exception is flexible endorsement, which can substantially complicate the transaction commit process.
We will elaborate on its challenges in \S\ref{sec:paradigm}.

\begin{figure}[tp]
    \centering
    \includegraphics[width=0.7\linewidth]{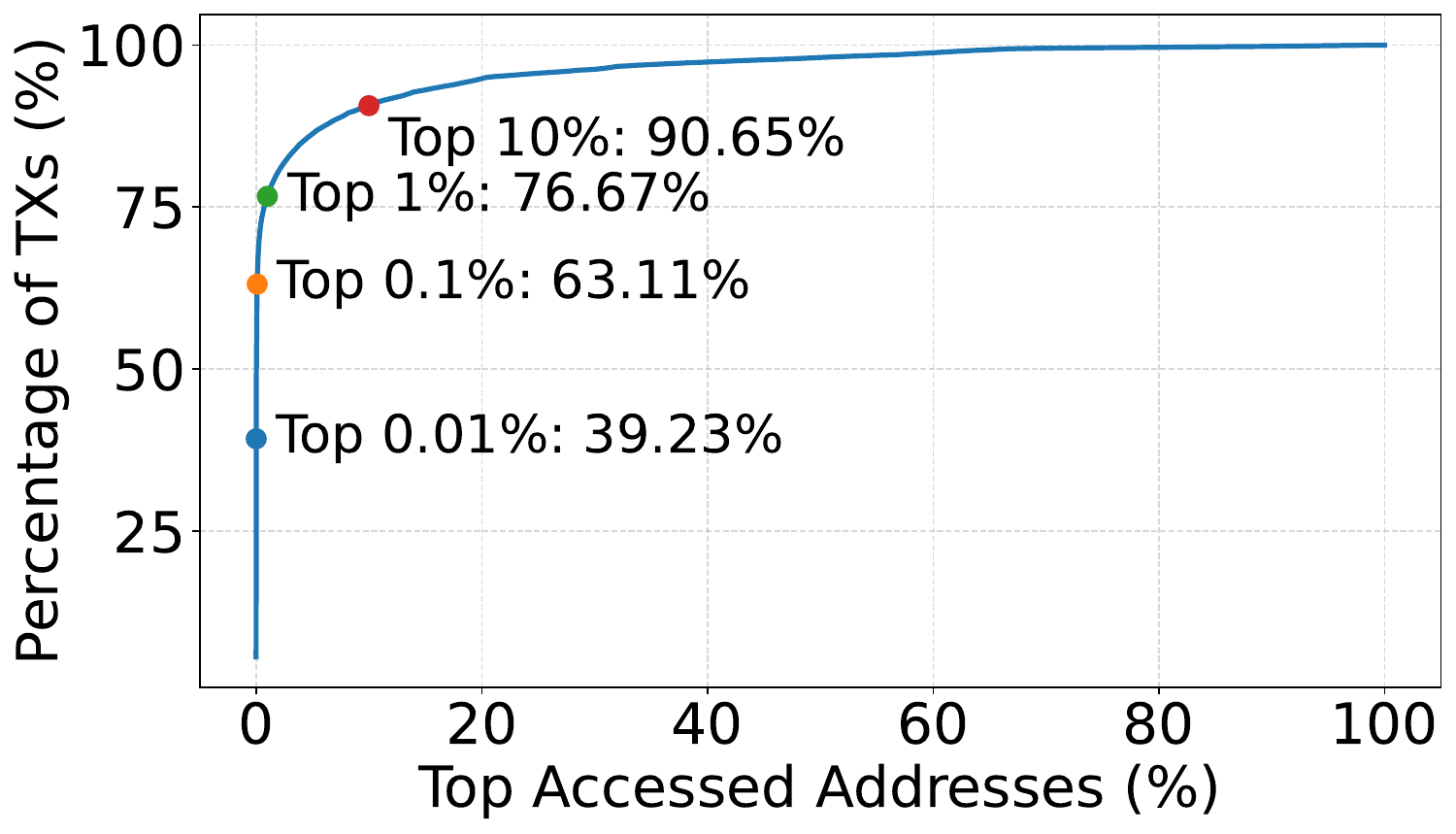}
    \caption{Cumulative distribution of transactions accessing USDT addresses on Ethereum between block heights 23,700,767 and 23,722,235.}
    \label{fig:txaccess}
    \vspace{-15pt}
\end{figure}

\vspace{-5pt}
\subsection{PBFT and Tendermint}
\label{sec:tendermint}

PBFT~\cite{pbft} is the first practical BFT SMR protocol in partially synchronous networks. 
With a single primary, PBFT offers an efficient three-phase commit in normal-case operations. 
Its complexity is dominated by the view-change procedure~\cite{hotstuff}, which replaces a potentially faulty primary.
Inspired by PBFT, Tendermint~\cite{tendermint} is a round-based protocol that adapts PBFT's three-phase commit process to decentralized and open environments.
Unlike PBFT, Tendermint employs a locking mechanism to safeguard the potentially committed proposal, and augments each phase with timers to guarantee advancement to the next round.
In this way, Tendermint obviates the need for complicated view-change operations, ensuring smooth progress and simplifying its implementation. 
Tendermint is also greatly simplified by its assumption of a \emph{gossip-based} communication model, which ensures that any message delivered to a correct node will eventually be delivered to all correct nodes.

The message pattern of Tendermint is shown in Figure~\ref{fig:tendermint}.
At the beginning of round $r$ (i.e., the \propose{} phase), the primary of round $r$ proposes a value $v$ and propagates it to others. 
Upon receiving $v$, each node $p$ verifies whether $v$ matches its locked value or $v$ represents a more recent quorum-certified (QC) value\footnote{
A QC value is one that has been approved (via \prevote{} messages) by a quorum of $\lfloor\frac{n+f}{2}\rfloor +1$ nodes (e.g., $2f+1$ when $n=3f+1$), while a more recent QC value is one generated in a higher round.
}. 
If so, it broadcasts a \prevote{} message to indicate its approval and proceeds to the \prevote{} phase.
Once two-thirds of nodes have prevoted for $v$ (i.e., $v$ is quorum-certified), node $p$ in the \prevote{} phase updates its locked value to $v$, broadcasts a \precommit{} message and enters the \precommit{} phase.
Finally, upon receiving \precommit{} messages from two-thirds of nodes,
$v$ becomes committed. 

To provide liveness, whenever a timeout occurs, node $p$ either broadcasts a $nil$ message (\prevote{} $nil$ or \precommit{} $nil$, when $tmr_{P,r}$ or $tmr_{V,r}$ expires), which helps nodes to move to the next phase, or directly transitions to the next round (when $tmr_{C,r}$ expires).
Upon entering a new round $r+1$, the new primary of round $r+1$ should propose the most recent QC value it is aware of, updated whenever a more recent QC value is received.

\vspace{-5pt}
\section{The order-execute-endorse framework}
\vspace{-5pt}
\label{sec:paradigm}

Although most limitations inherent in the order-execute framework can be mitigated, the integration of flexible endorsement remains non-trivial.
On the one hand, flexible endorsement must, in principle, support all types of policies that may implement arbitrary logic (see an example in Figure~\ref{fig:structure}).
This policy is in sharp contrast to quorum requirements in consensus, where any one-half or two-thirds of nodes are sufficient.
With respect to safety and liveness properties, endorsement policies may impose stricter requirements than those introduced by consensus (see an example in Figure~\ref{fig:challenge}).

On the other hand, as the ordering phase establishes a sequence for blocks and transactions, any \emph{pending} transaction that is waiting for endorsements (e.g., $tx_0$ in Figure~\ref{fig:structure}) suspends subsequent ones (e.g., $tx_3$ and $tx_6$).
Whether or not a pending transaction can eventually get properly endorsed is itself a consensus problem or even beyond.
Besides the challenges mentioned above,
flexible endorsement must satisfy the following requirement.

\begin{figure}[tp]
    \centering
    \includegraphics[width=0.8\linewidth]{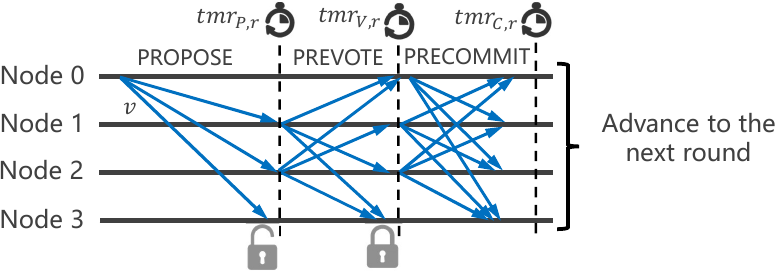}
    \caption{The message pattern of Tendermint.}
    \label{fig:tendermint}
    \vspace{-15pt}
\end{figure}

\begin{requirement}(Endorsement)
\label{chl:endorsement}
If a transaction is committed, its \textbf{execution results} must be properly endorsed with respect to some specific policies. 
\end{requirement}
Requirement~\ref{chl:endorsement} emphasizes that endorsements should apply to execution results rather than merely to the transactions themselves, 
a principle essential to enabling real-time governance, regulation, and risk management.
For instance, an auditor must be involved in an interest rate swap contract~\cite{fab-endorsement,hedgebook} to ensure that the swap is properly priced at fair value, which fluctuates over time as market conditions and other factors change.
Fabric also adheres to Requirement~\ref{chl:endorsement}, with peers executing transactions first and subsequently endorsing results.

With the above-mentioned considerations, it seems more pragmatic to first obtain endorsements before adding transactions to the ordered list---the approach adopted by Fabric\footnote{As the authors state, ``an endorser can simply abort an execution according to a local policy ... such unilateral abortion of execution is not possible in order-execute architectures''~\cite{fabric}.}.
We accordingly summarize the following dilemma:
\begin{dilemma}
\label{dilemma}
For any two transactions $tx_1$ and $tx_2$: (1) if they are ordered (say, with $tx_1$ preceding $tx_2$) prior to the endorsement of their execution results, then $tx_2$ may be delayed due to the unresolved status of $tx_1$; or (2) if their execution results are endorsed prior to ordering, then one of them may be removed because of concurrency conflicts.
\end{dilemma}

\begin{figure}[tp]
    \centering
    \includegraphics[width=1.00\linewidth]{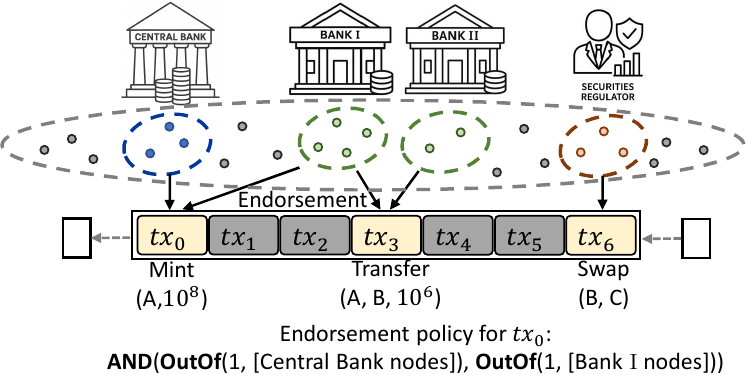}
    \caption{
    An illustrative example of flexible endorsement inspired by real-world applications.
    Within a block, a coin-minting transaction $tx_0$ must be endorsed by one node representing the central bank and another representing the commercial bank;
    a cross-border transfer $tx_3$ must be endorsed by the corresponding banks; and, a swap transaction $tx_6$ must be endorsed by the securities regulator.
    If $tx_0$ is removed due to missing endorsement, the execution results of $tx_3$ and $tx_6$ may change accordingly.
    Other transactions are not subject to any specific endorsement policy and therefore should never be removed.
    Moreover, the endorsers of any transaction should not affect the outcome or status of unrelated transactions.
    }
    \label{fig:structure}
    \vspace{-15pt}
\end{figure}

Dilemma~\ref{dilemma} indicates that either we choose to avoid concurrency conflicts but introduce delays between (conflicting) transactions, or we choose to avoid delays caused by pending endorsements but may suffer from concurrency conflicts.
While Fabric adopted the second way, 
this work instead commits to the first path that can seamlessly plug into existing order-execute blockchains,
which we refer to as the \emph{order-execute-endorse} framework (see Figure~\ref{fig:frame2}).
Specifically, transactions go through the following phases:
\begin{enumerate}[topsep=-0.1em,itemsep=-0.1em,itemsep=-0.5em]
\item[\textbf{1.}] Transactions are batched and sequenced by an ordering component, which may optionally provide fairness guarantees~\cite{pompe,probfair};
\item[\textbf{2.}] Transactions within each block are executed following the order given by the ordering phase;
\item[\textbf{3.}] Execution results are endorsed by the corresponding endorsers; 
if any transaction is removed due to missing endorsements, its subsequent transactions~(within the same block) must be (re-)executed and (re-)endorsed;
\item[\textbf{4.}] Once the remaining transactions are properly endorsed and committed, the (world) state is updated.
\end{enumerate}

To satisfy Requirement~\ref{chl:endorsement}, we must re-execute and re-endorse the subsequent transactions in Phase 3, as removing one transaction may alter the execution results of its successors (see Figure~\ref{fig:structure}).

The order-execute-endorse framework can be instantiated with any consensus protocol.
However, to address the challenges discussed above (and shown in Figure~\ref{fig:challenge}), the following requirement must be satisfied.

\begin{requirement}(Agreement)
\label{chl:agreement}
Nodes should reach agreement on whether a transaction has collected enough endorsements.
\end{requirement}

\begin{figure}[tp]
    \centering
    \includegraphics[width=0.95\linewidth]{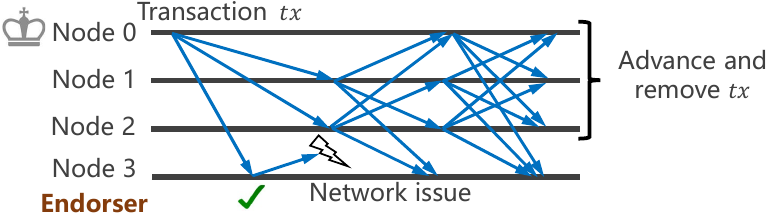}
    \caption{
	The problem of \emph{trivially} combining flexible endorsement and a consensus protocol.
    Consider the message pattern of Tendermint.
    Node 3 serves as the sole endorser for transaction $tx$.
    It has endorsed $tx$ but encountered a network issue.
    Other nodes proceed and eventually remove $tx$ (presuming that node 3 is faulty), even though, from the perspective of node 3, $tx$ would have been committed under the standard Tendermint procedure.}
    \label{fig:challenge}
    \vspace{-12pt}
\end{figure}

\begin{figure}[tp]
    \centering
    \includegraphics[scale=0.47]{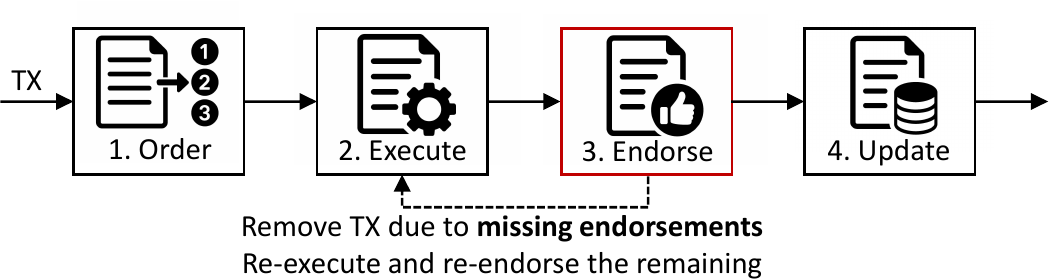}
    \caption{The order-execute-endorse framework (\sys).}
    \label{fig:frame2}
    \vspace{-10pt}
\end{figure}

\begin{figure*}[tp]
    \centering
    \includegraphics[width=1.00\linewidth]{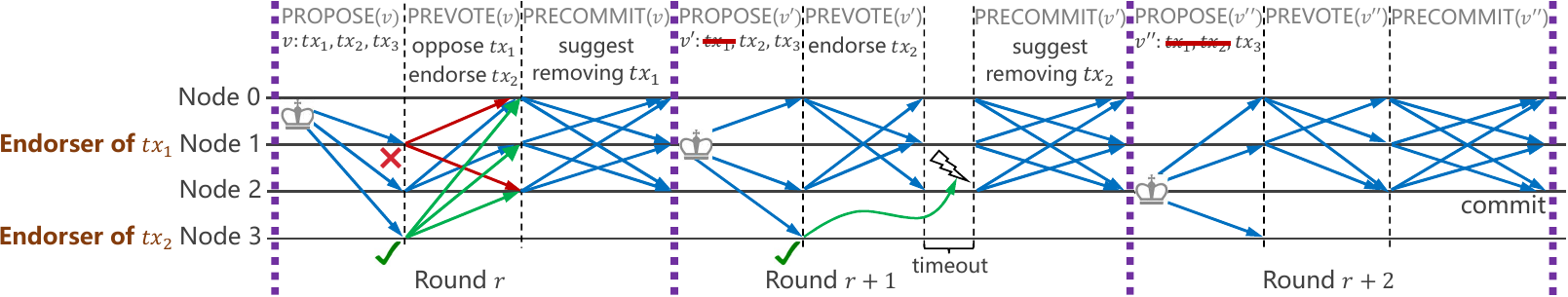}
    \caption{An example for \sys.
    The proposal in round $r$ contains three transactions: $tx_1$, $tx_2$, and $tx_3$.
    Since the endorser of $tx_1$ (node 1) opposes it, other nodes suggest its removal during the \precommit{} phase of round $r$, and the proposer of round $r+1$ (node 1) issues a proposal excluding $tx_1$.
    Due to network issues or processing delays, the endorser of $tx_2$ (node 3) fails to disseminate its endorsement in time, prompting other nodes to suggest removing $tx_2$ in round $r+1$.
    Finally, the proposal containing only $tx_3$ is committed in round $r+2$.}
    \label{fig:example}
    \vspace{-15pt}
\end{figure*}

One could delegate this responsibility to a standalone consensus protocol, treating it as an opaque component.
However, such an answer is burdensome, as any solution to the consensus problem is itself round-based or view-based (say, an ``inner loop'').
This implies that an additional round-by-round process (say, an ``outer loop'') must be introduced to reach final agreement on the committed transactions, with each round determining the outcome of at least one transaction.

As transactions can be removed due to missing endorsements, we must consider the other side of the coin: A transaction should \emph{not} be removed (by malicious nodes), \emph{unless} a sufficient number of endorsers agree to do so\footnote{
This problem is exacerbated when a fair-ordering component determines an initial proposal or when block creation is delegated to a separate role~\cite{pbs}.
}.
However, 
the designated endorsers may not respond in a timely manner.
It is necessary to handle the cases where some endorsers are faulty or offline.
We thus have the following \emph{relaxed} requirement.

\begin{requirement}(Censorship Resistance)
\label{chl:resist}
A transaction $tx$ should \textbf{not} be removed unless (1) there is a certificate from endorsers attesting that $tx$ should be removed; or, 
(2) its endorsers cannot respond in a timely manner.
\end{requirement}

The first condition can be met by allowing endorsers to explicitly oppose $tx$---a mechanism we call \emph{rapid removal}, while the second condition is triggered, e.g., when a timeout occurs and correct nodes have not received enough endorsements.
For the latter case, nodes should have a way to \emph{suggest} the removal of a pending transaction. 

\vspace{-5pt}
\section{The \sys protocol}
\label{sec:sys}
\vspace{-5pt}

\subsection{Overview}
\label{sec:overview}

Instead of implementing each phase in Figure~\ref{fig:frame2} by an individual component (like Fabric), we adopt the common blockchain or SMR paradigm and embed endorsement into Tendermint's communication phases.

\noindent\textbf{Main idea.}
Recall that Tendermint has two all-to-all communication phases in each round,  \prevote{} and \precommit{} (see~\S\ref{sec:tendermint}).
To satisfy Requirement~\ref{chl:endorsement}~(see \S\ref{sec:paradigm}), upon receiving a proposal, each node executes the included transactions, after which endorsers either endorse or oppose the corresponding transactions.
The \prevote{} message serves to capture endorsers' views on transactions.
To satisfy Requirement~\ref{chl:agreement},
once a node observes enough endorsements for all transactions within a block, it notifies others via its \precommit{} message.
On the contrary, upon timeout, if any transaction $tx$ has not collected enough endorsements, the node \emph{suggests} removing $tx$ also via its \precommit{} message.
To satisfy Requirement~\ref{chl:resist}, if at least $f+1$ nodes\footnote{
Among which, at least one node is correct.
This value may increase to $2f+1$ if needed.
} suggest removing $tx$, $tx$ is \emph{removable} and the proposers of subsequent rounds may propose a new value excluding $tx$.
We further restrict that a transaction may only be removed but not added into a proposal.
Otherwise, an infinite loop may arise, repeatedly adding and removing certain transactions.

The core idea of \sys is to distinguish, but not separate, between a procedure that merely reaches agreement on a proposal but may lack enough endorsements, and a normal procedure that can also commit the proposal. 
To this end, we introduce the following definition.
\begin{definition}
\label{def:reviewed}
We say a proposal $v$ is \textbf{\reviewed} in round $r$, if $v$ has completed all three phases in round $r$; that is, some correct node has received $n-f$ \precommit{} messages for $v$.
\end{definition}

It is evident that $v$ is also quorum-certified in round $r$.
Even if a proposal is \reviewed, it may not be committed due to missing endorsements.
This stands in sharp contrast to Tendermint.
The key for liveness is to ensure the following invariant.

\begin{restatable}{invariant}{ReviewInvariant}
\label{inv:reviewed}
If a proposal $v$ is \reviewed in round $r$, then there are only two cases:
\begin{itemize}[topsep=-0.1em,itemsep=-0.1em,itemsep=-0.5em]
\item Proposal $v$ is properly endorsed; or,
\item At least one transaction in $v$ can be removed in subsequent rounds.
\end{itemize}
\end{restatable}

With invariant~\ref{inv:reviewed}, as the protocol proceeds across rounds, there eventually exists a proposal (maybe an empty one) that is properly endorsed (see an example in Figure~\ref{fig:example}). 
After this point, \sys can be considered operating in the standard Tendermint mode, with safety and liveness ensured by Tendermint.

\vspace{-5pt}
\subsection{Potential threats and solutions}
\label{sec:threat}

Even with the above-mentioned design, 
\sys remains vulnerable to three specific threats that must be addressed.
As \sys leverages Tendermint's locking mechanism for safety, the threats pertain solely to liveness and fairness.
In the following we elaborate on each threat and its solution.
It is worth noting that they all rely on Tendermint's original messages and timers.

\noindent\textbf{(1) Avoid unnecessary re-endorsement}.
Malicious endorsers could cause the system to loop indefinitely if transactions require re-endorsement each round regardless of prior outcomes.
For instance, an adversarial endorser with sole discretion over transaction $tx$ may deliberately send its endorsement at the end of the \prevote{} phase, ensuring that only $f+1$ correct nodes observe it prior to broadcasting their \precommit{} messages.
In this scenario, $tx$ cannot be removed in subsequent rounds, as only $f$ correct nodes suggest doing so.
The new proposer of round $r+1$ considers $tx$ endorsable at the beginning, but again only $f+1$ correct nodes can obtain its endorsement 
at the end of the \prevote{} phase.

To solve this problem, \emph{we prevents nodes from proposing and endorsing a new proposal if a QC value was properly endorsed in some previous round.}
To this end, each node $p$ additionally maintains a variable $refRound_p$, which indicates the round in which node $p$ observes a properly endorsed QC value.
Each proposal also carries the $refRound_*$ maintained by the proposer.
Note that even if $v$ is properly endorsed in round $r$, it may not be committed in the same round.
By referencing round $r$, the proposer in a subsequent round can make $v$ eligible for commitment.

\noindent\textbf{(2) Aggregate partial endorsements.}
Malicious endorsers may unevenly distribute their endorsements, such that each transaction has $f+1$ correct nodes with sufficient endorsements, yet no single node observes all transactions properly endorsed.
To solve this problem, we aggregate all relevant endorsements (i.e., \prevote{} messages) generated within a round.
Moreover, if a malicious endorser issues two \prevote{} messages in the same round for $v$ but with \emph{differing} endorsements, these two messages are jointly interpreted as endorsements for \emph{all} transactions in $v$.
In case nodes or applications need to preserve endorsement proofs, they retain at most two messages from each endorser.

\begin{figure}[tp]
    \centering
    \includegraphics[width=0.9\linewidth]{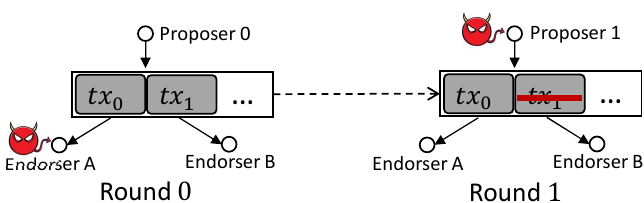}
    \caption{An example for potential censorship attacks.
    Endorser A and proposer 1 are malicious and attempt to remove $tx_1$ in round 1 by first withholding the endorsement of $tx_0$ in round 0.
    Proposer 0 and endorser B are correct.}
    \label{fig:censorship}
    \vspace{-15pt}
\end{figure}

\noindent\textbf{(3) Resist censorship.}
The third threat is somewhat more subtle.
Notably, even with a correct proposer and synchronous network, malicious or crashed endorsers can still compel the protocol to proceed to a new round\footnote{By contrast, Tendermint is able to terminate within the current round.}.
For instance (see Figure~\ref{fig:censorship}), 
assume a block contains two transactions, $tx_1$ and $tx_2$.
An adversary $\mathcal{A}$ has unilateral authority to endorse or oppose $tx_0$, while lacking such influence over $tx_1$.
In the initial round, $\mathcal{A}$ withholds its endorsement for $tx_0$ or sends it at the last moment, thereby triggering a new round.
If the new proposer is also controlled by $\mathcal{A}$, it may exploit this opportunity to remove $tx_1$ through various strategies.

A malicious proposer may simply issue a proposal excluding $tx_1$.
Although \S\ref{sec:overview} outlined an approach to mitigate this problem, batching introduces additional complexities.
As the protocol may remove multiple transactions, it is essential that each removal obtains enough supports.
To this end, \emph{$refRound_p$ is also used for 
tracking the evolvement of \reviewed{} proposals}.
That is, $refRound_p$ is updated whenever a ``shorter'' value is \reviewed at node $p$, until $p$ obtains enough endorsements for a QC value (see Figure~\ref{fig:evolvement} for an example).
At the start of each round, the proposer selects either a valid value (i.e., a properly endorsed QC value) or an \reviewed value containing the fewest transactions.

More subtly, a malicious proposer may attempt to deprive certain endorsers of their endorsement rights by withholding a proposal until the timer $tmr_{V,r}$ expires.
The proposer may still propagate its proposal to other nodes before timeout, such that the proposal can be \reviewed in this round.
In this case, even if transaction $tx_1$
can be endorsed, the absence of its endorsers still leads to its removal.
Although the gossip communication model ensures reliable delivery
within $\Delta$ time after GST, endorsers may have already sent a $nil$ \prevote{}. 

To defend against this attack, \emph{we require endorsers to always broadcast their endorsements regardless of being locked on another value or having issued a $nil$ \prevote{}.} 
Specifically, we further introduce a flag \texttt{CON} in \prevote{} messages.
A \prevote{} message with \texttt{CON} set to true means the sender also votes \emph{for} the proposal at the consensus layer, like a normal message in Tendermint.
On the contrary, if \texttt{CON} is set to false, 
the \prevote{} message only carries endorsements of the sender, but the sender votes \emph{against} the proposal at the consensus layer.
Such a message cannot contribute to the proceeding of the consensus layer.
This mechanism ensures that whenever a proposal is \reviewed,
endorsers were always able to express their opinions within the same round.

Finally, \emph{if a node receives two differing proposals in the same round, the node immediately broadcasts a $nil$ \precommit{} if it has not entered the \precommit{} phase.}
This mechanism prevents multiple proposals from being executed and endorsed within the same round, thereby mitigating DoS attacks.
We may further combine accountability~\cite{oabc} with our protocol to detect faulty processes and alleviate the problem. 
After GST, either no proposal is \reviewed in round $r$, or every correct node has received all endorsements sent by correct nodes before entering the \precommit{} phase.

\begin{figure}[tp]
    \centering
    \includegraphics[width=1.00\linewidth]{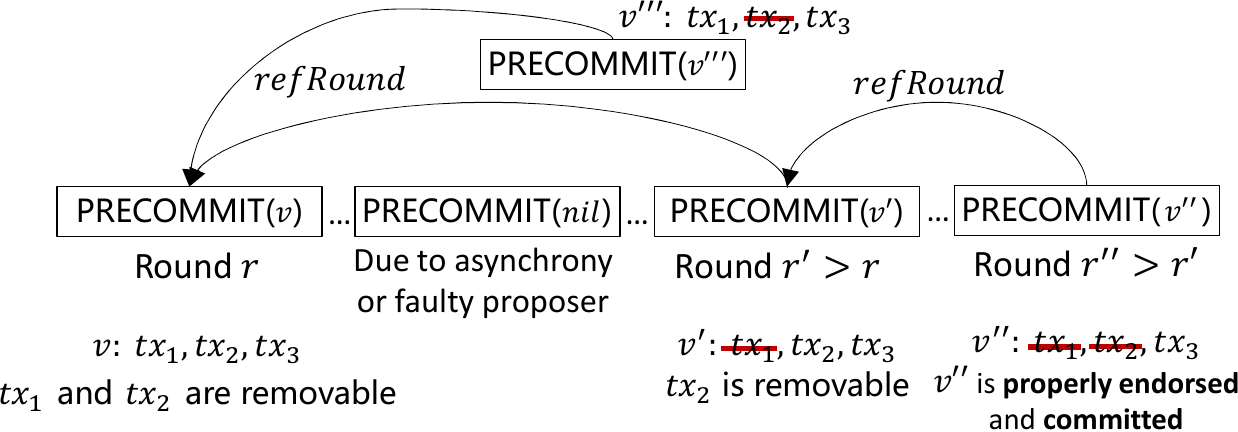}
    \caption{An illustrative example for the evolvement of \reviewed proposals.
    Proposal $v$ is \reviewed in round $r$ and can therefore be referenced by subsequent rounds.
    Proposal $v'$ references $v$ and excludes $tx_1$.
    Proposal $v''$ references $v'$, and is properly endorsed and committed in round $r''$.
    It is possible that proposal $v'''$ also references $v$ but excludes $tx_2$, in case $tx_1$ was endorsed but observed by only a few nodes before round $r'$ (before GST).
    }
    \label{fig:evolvement}
    \vspace{-15pt}
\end{figure}

With all the mechanisms described above in place, \sys guarantees the following important invariant:
\begin{restatable}{invariant}{censorship}
\label{inv:censorship}
After GST, if a sufficient number of endorsers endorse transaction $tx$ and $tx$ cannot be rapidly removed, 
then no correct node suggests removing $tx$ in round $r$.
\end{restatable}

\vspace{-5pt}
\subsection{Protocol detail}
\label{sec:detail}

Like Tendermint, each node $p$ locally maintains two critical pairs of variables to track candidate values.
The $\{lockedValue_p$, $lockedRound_p\}$ pair implements the locking mechanism, recording the QC value and its round in which node $p$ became locked.
The $\{validValue_p$, $validRound_p\}$ pair records the most recent QC value and its associated round received by node $p$, enabling $p$ to propose a valid value when acting as a new proposer.
In \sys, these two pairs are updated \emph{only} when the corresponding endorsements are also received, i.e., when the value is properly endorsed.

In addition, each node $p$ also maintains a new pair $\{refValue_p$, $refRound_p\}$, which serves two purposes (as discussed in \S\ref{sec:threat}).
If there exists a properly endorsed QC value, i.e., $validValue_p\neq nil$,
then $refRound_p$ records the round in which the endorsements for $validValue_p$ were generated\footnote{QC messages and endorsements may be generated in different rounds.}.
Otherwise, $\{refValue_p, refRound_p\}$ indicates an \reviewed value and its round.
As multiple \reviewed values may exist,
$refValue_p$ records the one with the fewest transactions node $p$ is aware of, thereby preserving monotonic decrease in transaction number.
Due to Invariant~\ref{inv:reviewed}, this pair assists proposer $q$ in proposing a ``shorter'' value. 
Besides a value $v$ (e.g., a list of transactions and their execution results), each proposal also includes two fields, $vr$ and $rr$, representing proposer $q$'s $validRound_q$ and $refRound_q$, respectively.

The following describes \sys phase by phase as node $p$ proceeds into round $r$, beginning at 0.
Initially, other round-related numbers are set to $-1$ and all values are set to $nil$. 
The pseudocode is postponed to Appendix~\ref{sec:pseudocode}.

\noindent\underline{\textbf{Phase I: \propose{}.}}
The proposer or primary $q$ of round $r$ first checks whether its $validValue_q$ is non-nil.
If so, it proposes $validValue_q$ along with the associated $validRound_q$ and $refRound_q$;
otherwise, if $refValue_q$ is non-nil, $q$ proposes $refValue_q$ along with $refRound_q$.
If neither is true, $q$ makes a new proposal.
Other nodes initialize a timer $tmr_{P,r}$.

Upon receiving a proposal (i.e., a \propose{} message containing $v$, $vr$  and $rr$), 
(1) if $vr\neq -1$, node $p$ verifies whether $v$ is a QC value in round $vr$ and whether round $rr$ contains sufficient endorsements for $v$;
(2) If $vr=-1$ but $rr\neq -1$, 
node $p$ instead verifies whether some value $v'$ was \reviewed in round $rr$, and whether, for each transaction $tx$ included in $v'$ but absent from $v$, at least $f+1$ \precommit{} messages from round $rr$ indicate its removal.
If either condition (1) or (2) holds, or if $rr=-1$ and node $p$'s local $refRound_p=-1$, node $p$ proceeds with Tendermint's locking mechanism:
It verifies whether $lockedValue_p=nil$, $v = lockedValue_p$, or $v$'s QC messages were generated in a round higher than $lockedRound_p$.
If so, 
node $p$ further distinguishes two cases:
(1) If $vr=-1$,
node $p$ executes transactions in $v$, endorses or opposes each of them, and broadcasts a \prevote{} message reflecting the results; or,
(2) if $vr\neq -1$, node $p$ simply broadcasts a \prevote{} message indicating its support for $v$. 
If $v$ fails any of the above verifications or if the timer $tmr_{P,r}$ expires, node $p$ broadcasts a $nil$ \prevote{}.

\noindent\underline{\textbf{Phase II: \prevote{}.}}
Upon receiving any $2f+1$ \prevote{} messages ($v$ or $nil$), node $p$ starts a timer $tmr_{V,r}$.
Upon receiving $2f+1$ \prevote{} messages and sufficient endorsements for $v$, node $p$ updates its $lockedValue_p$ to $v$ and $lockedRound_p$ to $r$, and broadcasts a \precommit{} message supporting $v$.
When $tmr_{V,r}$ expires, there are also two cases:
(1) if node $p$ has received $2f+1$ \prevote{} messages for $v$, it broadcasts a \precommit{} message for $v$ while suggesting the removal of the first transaction in $v$ that lacks sufficient endorsements; or,
(2) if node $p$ has not received $2f+1$ \prevote{} messages for any $v$, it broadcasts a $nil$ \precommit{}.

\noindent\underline{\textbf{Phase III: \precommit{}.}}
Upon receiving any $2f+1$ \precommit{} messages ($v$ or $nil$), node $p$ starts a timer $tmr_{C,r}$. 
Upon receiving $2f+1$ \precommit{} messages for $v$, there are also two cases:
(1) if none of them suggests transaction removal, node $p$ commits $v$, resets variables, and increments the block height; otherwise,
(2) if at least $f+1$ nodes suggest removing some transaction $tx$ in $v$, $validRound_p=-1$, and $length(v)<length(refValue_p)$, node $p$ sets its $\{refRound_p,refValue_p\}$ to $\{r,v\}$ and enters round $r+1$.
When $tmr_{C,r}$ expires, node $p$ also enters round $r+1$.

Note that whenever a value $v$ becomes quorum-certified in round $r'>validRound_p$ and $v'$s endorsements are also received, node $p$ updates $validValue_p$ to $v$ and $validRound_p$ to $r'$.
Node $p$ also updates $refRound_p$ to either $r'$ or $rr$ carried in the proposal, depending on which round rendered $v$ properly endorsed.
From the moment $validRound_*\neq -1$, 
\sys can be regarded as operating in the standard Tendermint mode.

\noindent\textbf{Rapid removal.}
As endorsers can explicitly oppose a transaction via their \prevote{} messages,
\sys also support a rapid removal mechanism. 
then node $p$ may immediately send a \precommit{} message that suggests removing $tx$, provided that $v$ is quorum-certified.
If, for each transaction $tx$ in $v$, $tx$ is either properly endorsed or vetoed by a sufficient number of endorsers, node $p$ can immediately send a \precommit{} message, provided that $v$ is quorum-certified.

\vspace{-5pt}
\subsection{Correctness argument}

We sketch that \sys provides correctness but postpone its detailed proof to Appendix~\ref{sec:correctness}.

\noindent\textbf{Safety.}
Because \sys relies on Tendermint's locking mechanism to preserve committed proposals across rounds, and because proposals are locked only when properly endorsed, it follows that \sys guarantees safety.

\noindent\textbf{Termination.}
When some proposal is properly endorsed, the gossip-based communication ensures that eventually every correct node $p$ updates its $validValue_p$ and $validRound_p$.
Since then, the termination property of \sys is guaranteed by Tendermint.
Due to Invariant~\ref{inv:reviewed}, each time a more recent \reviewed proposal is referenced by a new proposal, at least one transaction is removed.
As the protocol proceeds round by round, eventually there exists a properly endorsed proposal.
Invariant~\ref{inv:reviewed} holds because, for each transaction $tx$ in a proposal, either some correct node observes its endorsements or every correct node suggests its removal.

\noindent\textbf{Censorship resistance.}
Transaction $tx$ can be removed only if its proposal is \reviewed in some round $r$ and a correct node suggests removing $tx$.
Assume $tx$ cannot be rapidly removed.
Let $t$ denote the earliest time at which a correct node, say node $p$, sent a \prevote{} message (and propagated the proposal) in round $r$.
Due to the gossip-based communication model, after GST, every correct node received a proposal before $t+\Delta$.
If any correct node, say node $q$, suggested removing $tx$ in round $r$, its \precommit{} message must have been triggered by the timeout of $tmr_{V,r}$, which is set to $2\Delta$ time.
Hence, node $q$ sent its \precommit{} message after $t+2\Delta$.
As correct nodes always send their endorsements via \prevote{} messages, 
at time $t+2\Delta$ node $q$ must have received the endorsements for $tx$.
Thus, node $q$ should not suggest removing $tx$.

\vspace{-5pt}
\subsection{Endorsement semantics}
\label{sec:policy}

As transactions within a block are endorsed and agreed upon collectively, the endorsement semantics of \sys should be further refined.
In the \prevote{} phase, endorsers express their views on a transaction $tx$ in one of three ways: 
\begin{enumerate}[topsep=-0.1em,itemsep=-0.1em,itemsep=-0.3em]
\item Endorse $tx$ (by default);
\item Oppose $tx$ based on its execution results; or,
\item Oppose $tx$ irrespective of its execution results.
\end{enumerate}
The second choice can be interpreted as ``the endorser opposes $tx$ if all preceding transactions within the same block are committed''.
With the third choice, endorsers can veto multiple transactions via a single round, quickly converging on a properly endorsed proposal.

Likewise, during the \precommit{} phase, nodes provide their suggestions regarding transaction $tx$ in one of three ways: 
\begin{enumerate}[topsep=-0.1em,itemsep=-0.1em,itemsep=-0.3em]
\item Keep $tx$ (by default);
\item Remove $tx$ if $tx$ is the first removable one; or,
\item Remove $tx$ irrespective of preceding transactions.
\end{enumerate}
By the third way, the proposer in a later round can remove multiple such transactions.
A node will suggest the removal of $tx$ via the third manner if (1) a sufficient number of endorsers oppose $tx$ via the third manner, or (2) the timeout of a contract $C$'s endorsements triggers the removal of $tx$ and $tx$ will always invoke $C$ regardless of the execution contexts\footnote{It is possible that the removal of preceding transactions alters the invocation of $tx$.}.
Otherwise, nodes suggest removal via the second manner.

If multiple transactions can be removed by the second manner, the new proposer should remove the first one.
Other nodes, upon receiving a new proposal, check whether the removed transaction $tx$ is the first among all removable ones.
More specifically, for any transaction $tx'$ preceding $tx$, 
the nodes must have received the endorsements of $tx'$.
This verification does not affect liveness, since, through gossip, the new proposer disseminated all \precommit{} messages and endorsements it has received.
The new proposer also removes all transactions that are removable via the third manner.
To keep our design and implementation simple, we permit scenarios in which transaction $tx$ is properly endorsed in some round $r$ yet removed in a later round $r'>r$.
If this situation must be addressed, one may further link endorsement proofs across multiple rounds.

\section{Further Discussion and Limitations}

\sys can serve as the layer-1 infrastructure for either permissioned or permissionless blockchains (with PoX), as the consensus protocol for a shard or sidechain, or as a core component of a layer-2 solution.
In addition, \sys supports several new features.

\noindent\textbf{Multi-party atomic commit.}
For privacy reasons, multiple untrusted parties may wish to conceal their trading logic and submit only the resulting value-transfer transactions on-chain.
\sys allows multiple transactions to be treated as a `super-transaction'.
Either all related transactions are committed in the same block, or, if any of them is removed, the remaining transactions must be re-endorsed and can also be removed by endorsers.
In Fabric, as batching is not applied during the execution phase, each transaction is endorsed individually, resulting in a lack of atomicity.

\noindent\textbf{Dynamic endorsement.}
By embedding endorsements into consensus messages, \sys naturally supports \emph{dynamic} endorsement, where endorsements are needed only under certain conditions.
For example, a token contract may impose a daily limit on every account, such that endorsements are triggered only when a transfer exceeds that limit.
In contrast, Fabric requires every transaction to be endorsed by specific peers, regardless of the current state.
Dynamic endorsement provides a more natural means of combining decentralization with (partially) centralized regulation.

\noindent\textbf{On-chain accountability}.
When a transaction is removed, the decision is recorded on chain.
There are substantial differences between a transaction that was once proposed and removed, and one that never appeared in any block~(due to missing endorsements).
This record provides auditable evidence for assessing an endorser's responsiveness and reputation.
Besides, removed transactions may still pay gas fees, thereby mitigating DoS attacks.

\noindent\textbf{Limitations.}
The main limitation of \sys and its framework is that each transaction lies on the critical path of all others.
First, since risk detections can be time-consuming, the $\Delta$ duration for $tmr_{V,r}$ may be insufficient.
A naïve solution is to increase the waiting time, but this negatively impacts latency and prolongs the removal of timed-out transactions.
A more promising solution might be a \emph{two-phase} protocol.
The first phase pre-processes transactions off-chain without accurate execution results, while only the second phase takes place on chain.
As much effort as possible should be devoted to the first phase. 
Once a client receives sufficient acknowledgments from endorsers in the first phase, it submits the transaction on-chain.
In the second phase, endorsers simply verify the compliance of its execution results.

Second, Byzantine endorsers, when colluding with malicious clients, can insert several transactions into a block and remove only one per round, badly affecting performance.
Such malicious behaviors may be mitigated by, e.g., collecting a gas fee for removed transactions.
Moreover, since the transaction-removal mechanism in the \precommit{} phase is generic, nodes may quickly discard multiple suspicious transactions within a single round.
Note that such removal should not undermine liveness or censorship resistance.
We leave this direction for future work.

\vspace{-5pt}
\section{Evaluation}
\vspace{-5pt}

\begin{figure*}[t!]
    \centering
    \begin{minipage}{0.28\linewidth}
        \includegraphics[scale=0.25]{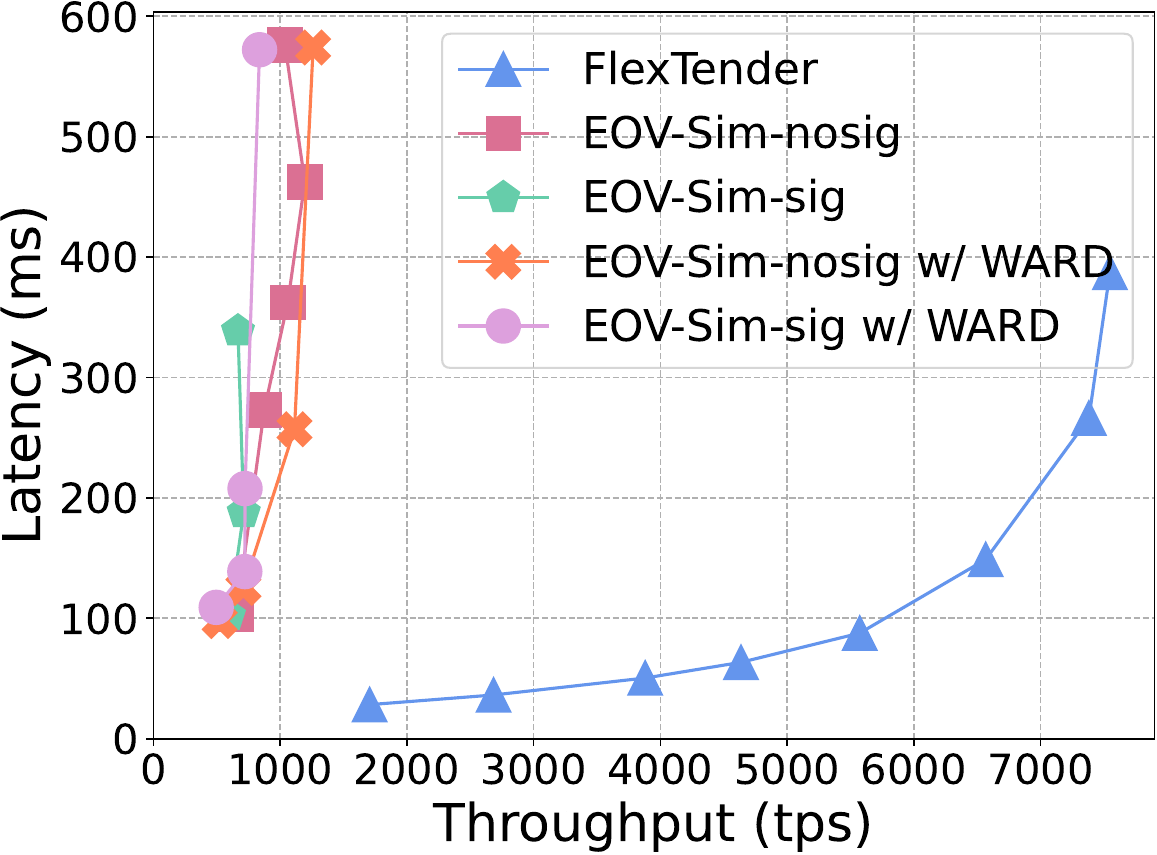}
        \caption{Performance under the USDT workload with $n=4$.}
        \label{fig:eov-flexmint-4}
    \end{minipage}
    \hfill
    \begin{minipage}{0.34\linewidth}
        \centering
        \includegraphics[scale=0.27]{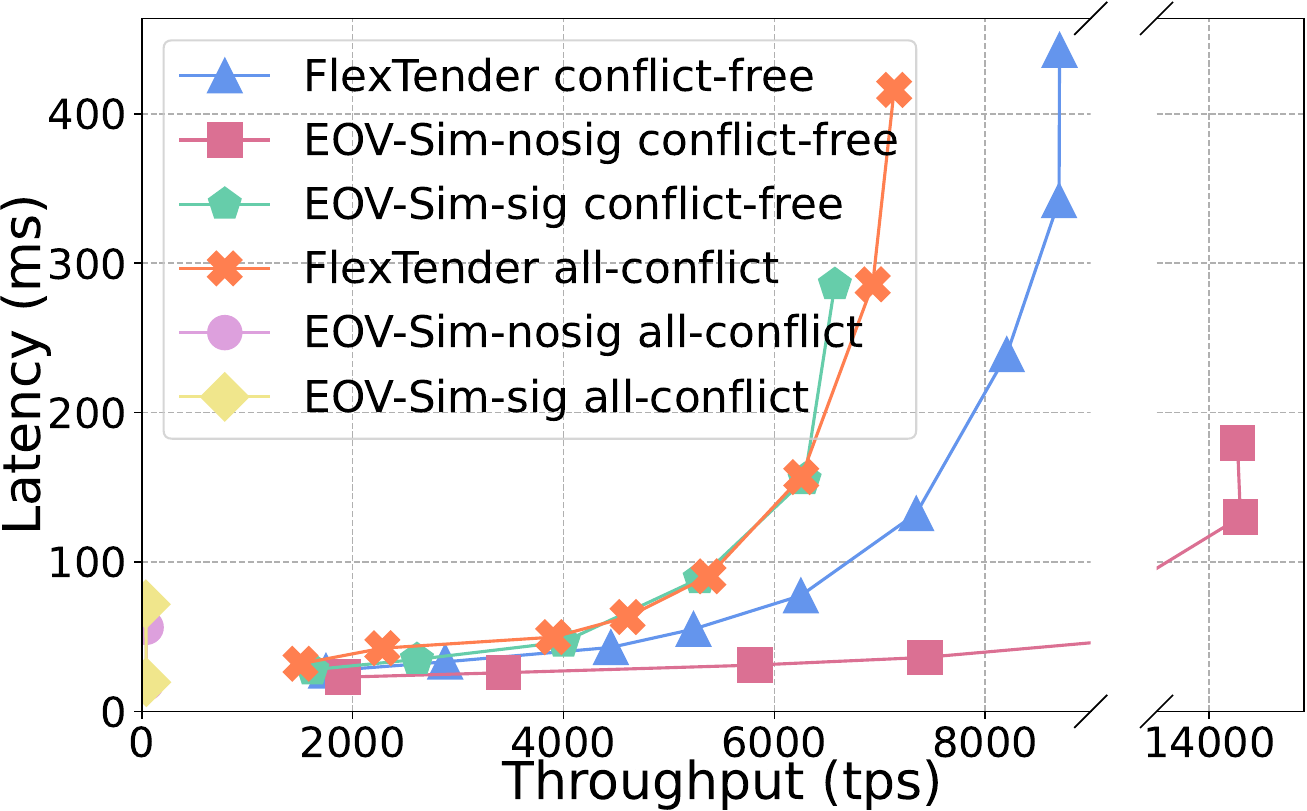}
          \caption{Performance under the conflict-free and all-conflict workloads with $n=4$.}
        \label{fig:conflict}
    \end{minipage}
    \hfill
    \begin{minipage}{0.34\linewidth}
       \centering
        \includegraphics[scale=0.26]{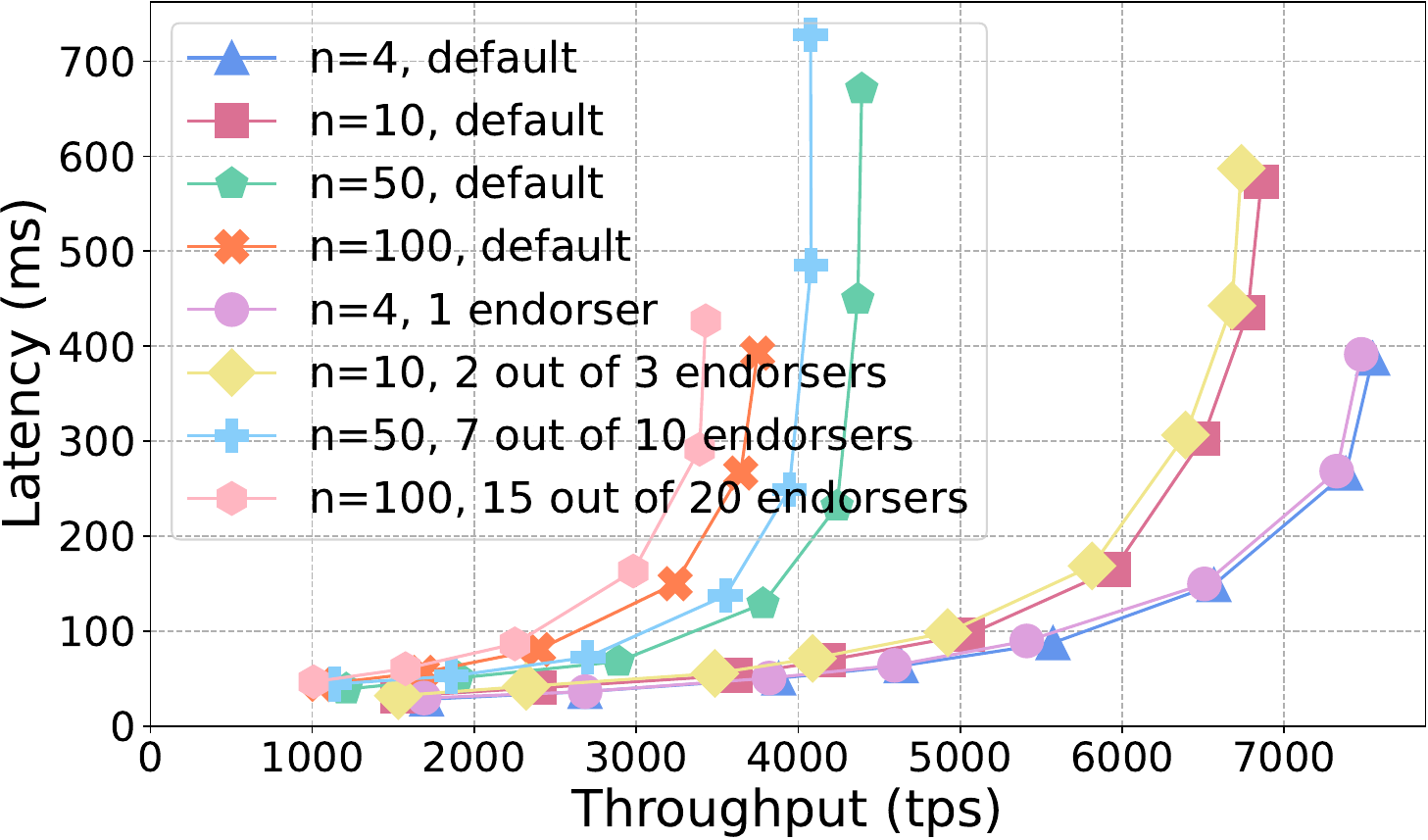}
        \caption{Performance of \sys with different node numbers and policies.}
        \label{fig:scale}
    \end{minipage}
    \vspace{-15pt}
\end{figure*}

\subsection{Implementation}

We have implemented \sys in Go within ChainMaker 2.3.6~\cite{chainmaker}, 
which is augmented with a parallel-execution engine in which the primary pre-executes transactions in parallel and produces an execution schedule.
The schedule is represented as a DAG that other nodes follow.
Messages and world state are persisted.
Note that no rollback mechanism is required, since the ledger state is not updated until all transactions within a proposal have been endorsed and committed.
The code of \sys is available at \url{https://git.chainweaver.org.cn/stcsm/flextender}.

We also simulate a simplified EOV procedure on our platform, which we refer to as \eov.
First, the primary simulates (pre-executes) transactions in parallel based on the same snapshot, producing read/write sets for each transaction.
The pre-executed transactions are then batched as a proposal and ordered by Tendermint.
After consensus, each node locally checks, for each transaction $tx$, whether $tx$ has read a key updated by a preceding transaction within the same proposal.
If so, $tx$ is aborted, and the new primary places it back into its mempool in FIFO order.
To simplify the setting while maximizing the benefit of EOV's support for independent transaction simulation by endorsers, we do not require any node other than the primary to execute transactions.
In Fabric, distinct endorsers must return consistent results before the client can proceed to the ordering phase, a requirement that is non-trivial to satisfy under high-contention workloads.
Since Fabric does not batch transactions during the execution phase, each transaction must be endorsed individually.
We further distinguish two cases based on whether each transaction is endorsed and signed by the primary (\eov-sig and \eov-nosig).
Both \sys and \eov rotate the primary as the round number or block height advances.

To reflect the state of the art in Fabric variants, we also incorporate a transaction reordering mechanism~\cite{fabric++,frabirsharp,htfabric} into \eov, whereby the primary rearranges the transaction sequence so as to avoid unnecessary aborts.
Specifically, we implemented the WARD algorithm~\cite{htfabric}, which balances computational complexity against effectiveness by reordering transactions according to their conflict degrees. 
Other Fabric variants are either largely orthogonal to our contribution (e.g., pipelining~\cite{fastFabric,sparsepeer}) or can no longer satisfy the requirement that execution results be properly endorsed (e.g., re-execution~\cite{xox,htfabric} after the validation phase).

\noindent\textbf{Optimizations.}
Because re-execution incurs additional overhead, we further introduce two simple techniques.
First, proposers avoid propagating the payload in any round $r>0$ (but only its hash), provided that the new proposal references an \reviewed one.
Second, when transaction $tx$ is removed,
the proposer only re-executes transactions that have causal dependencies on $tx$ in the DAG.
If any of them introduces new dependencies, the proposer instead re-executes all.

\vspace{-5pt}
\subsection{Experimental setup}

\noindent\textbf{Testbed.}
We conduct experiments on Amazon EC2 platform, both within a region (LAN) and across regions (WAN).
Each node is deployed on a c5.2xlarge instance equipped with 8 vCPUs, 16 GiB of memory, and up to 10 Gbps of bandwidth.

\noindent\textbf{Workload.}
We primarily use the USDT~\cite{usdt} transfers on Ethereum as our workload, as they capture key characteristics of DeFi applications and can be readily evaluated on our platform.
The workload includes both direct invocations and those triggered by other smart contracts.
For instance, a swap~\cite{uniswap} operation may contain several USDT transfers.
To preserve the actual access pattern as much as possible, we treat them as a single transaction.
The data span block heights 23,700,767 
to 23,722,235 (November 2025),
comprising a total of 1,215,353 transfers.
Each node locally maintains a mempool (transaction pool).
When acting as the primary, the node retrieves a batch of transactions from its mempool and submits them to the execution engine.
As the primary rotates every round or height, 
we evenly distribute the stream of USDT transfers across all nodes, ensuring that nodes propose transactions in the same order as they appeared on Ethereum.

\noindent\textbf{Gossip-based communication.}
Tendermint and \sys rely on gossip-based communication to provide liveness and censorship resistance.
For small-scale experiments (with $n=4$ or $n=10$), each node propagates every message to all other nodes;
for median-scale experiments (with $n=50$ or $n=100$), each node instead propagates every message to $\lceil\log_{2}n\rceil$ randomly selected nodes, following a configuration commonly used in blockchain networks~\cite{kedemliaeth} and in our platform.
Each timer is set to two seconds.

\vspace{-5pt}
\subsection{Performance in LAN}

We first conduct experiments within a region.
We gradually increase the batch size until the protocols reach saturation.
There is no specific endorsement policy in this setting, namely any two-thirds of nodes can properly endorse a transaction.
Figure~\ref{fig:eov-flexmint-4} presents the results for $n=4$.
\sys achieves $10.6\times$ speedup in throughput compared to \eov-sig.
\eov's performance is throttled by frequent aborts, as there are certain ``hot-spot'' addresses (see Figure~\ref{fig:txaccess}) on Ethereum such as those involved in swaps between USDT and other popular tokens, as well as the hot wallets of centralized exchanges\footnote{The two most frequently accessed addresses in this workload are \href{https://etherscan.io/address/0xc7bBeC68d12a0d1830360F8Ec58fA599bA1b0e9b}{0xc7bBeC68d12a0d1830360F8Ec58fA599bA1b0e9b} and \href{https://etherscan.io/address/0x559432e18b281731c054cd703d4b49872be4ed53}{0x559432e18b281731c054cd703d4b49872be4ed53}, corresponding to the Uniswap and OKX hot wallets, respectively.}.
Even with the reordering mechanism (i.e., WARD), the improvement remains marginal.
This is because all conflicts in USDT transfers stem from transactions reading from and writing to the same address, and such (read-after-write and write-after-read) conflicts can hardly be resolved through reordering.
Figure~\ref{fig:eovabort} further shows the abort rates and instantaneous throughput of \eov-nosig when saturated, where higher abort rates generally coincide with lower throughput.
Although \sys nodes also execute transactions in parallel, its retries are confined to the primary alone.
\eov-nosig and \eov-sig differ only slightly in performance.  
While signing operations introduce additional overhead, the high abort rate remains the primary bottleneck in this setting.

We then evaluate both protocols under the conflict-free and all-conflict workloads.
In the conflict-free workload, each transfer involves distinct \textsc{from} and \textsc{to} addresses, whereas in the all-conflict workload, all transfers access the same pair of addresses.
As reordering has no effect in either scenario, we omit it from the evaluation.
Figure~\ref{fig:conflict} demonstrates that \eov-nosig outperforms \sys when there is no conflict.
This is because in \eov only the primary executes transactions, analogous to how Fabric allows endorsers to simulate transactions independently.
In contrast, \sys adopts a two-phase execution model, in which the primary executes and schedules transactions first and other nodes subsequently validate the results by also executing the transactions.
The performance of \eov-sig degrades significantly under the conflict-free workload because cryptographic operations introduce non-negligible costs, highlighting the benefit of batching.
Under the all-conflict workload, both protocols exhibit reduced throughput compared to the conflict-free case, as concurrency conflicts must be handled either within the primary (\sys) or by retrying the entire procedure (\eov).
\eov can barely work because nearly all transactions are constantly aborted and retried, which exacerbates the problem posed by independent simulation.
Interestingly, \eov occupies both ends of the spectrum, whereas \sys consistently provides stable performance regardless of conflicts, reflecting their different design choices.

\begin{figure}[tp]
    \centering
    \includegraphics[scale=0.25]{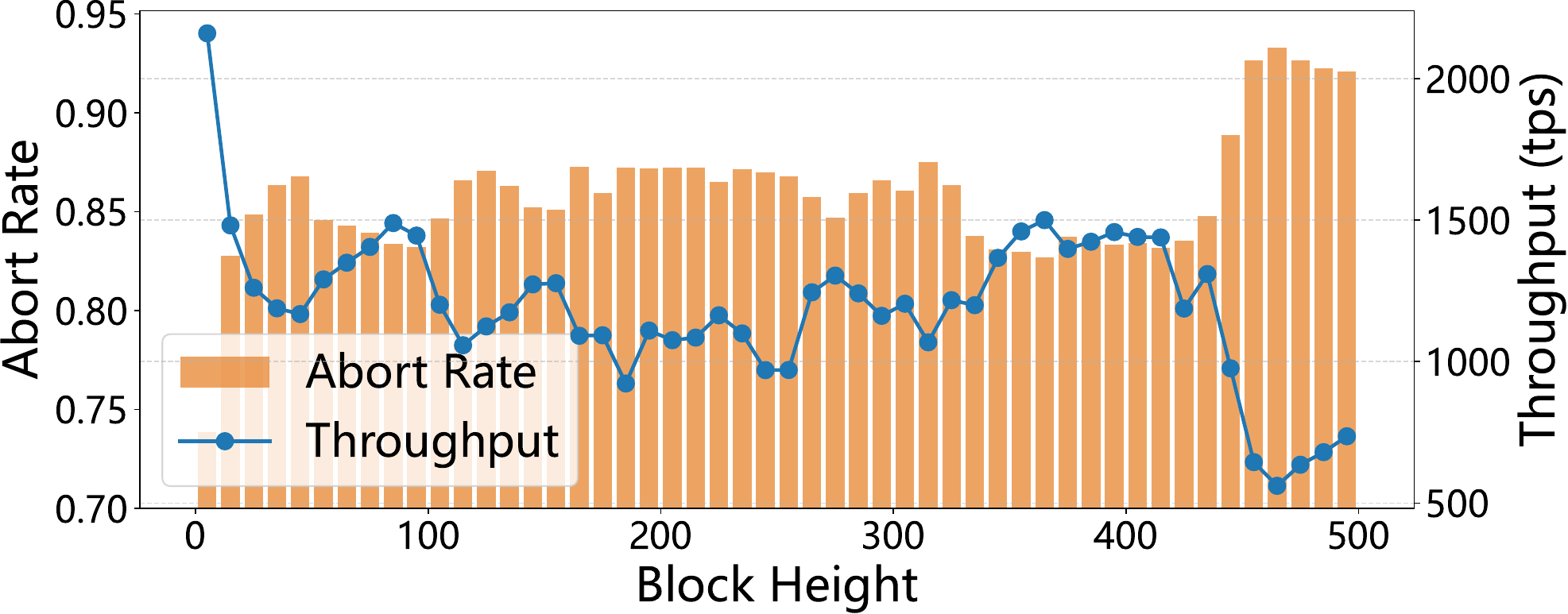}
    \caption{Abort rates and instantaneous throughput of \eov-nosig when saturated.}
    \label{fig:eovabort}
    \vspace{-15pt}
\end{figure}

We further evaluate \sys with the number of nodes ranging from 4 to 100 under different endorsement policies, as depicted in Figure~\ref{fig:scale}.
When the endorsement policy requires a single designated endorser or $t$-out-of-$m$ endorsers, \sys always incurs moderate performance overhead relative to the default policy, since nodes can proceed only after receiving endorsements from those specific endorsers rather than from any two-thirds of nodes.
As the number of nodes increases, \sys also experiences performance degradation, as the primary has to propagate proposals to more nodes and each node has to process more consensus messages~\cite{bftbottleneck}.
Large-scale performance could be further improved either by amortizing the primary's burden~\cite{Kauri,rainblock} or by incorporating sharding or layer-2 solutions, which are largely orthogonal to the main contributions of this work.

\begin{figure*}[t!]
    \centering
    \begin{minipage}{0.32\linewidth}
        \includegraphics[scale=0.20]{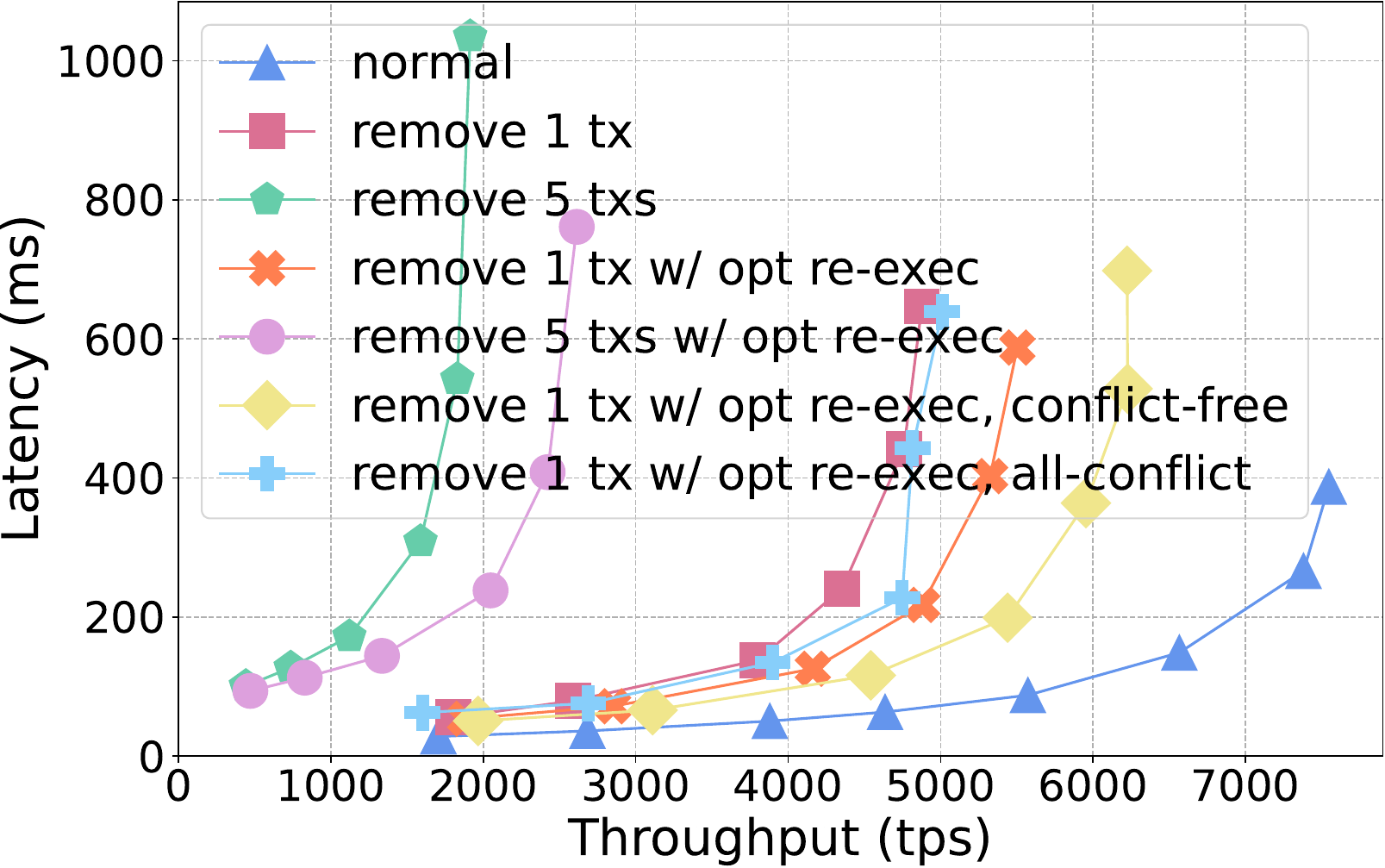}
        \caption{Performance of \sys with rapid removal ($n=4$).}
        \label{fig:fastremove}
    \end{minipage}
    \hfill
    \begin{minipage}{0.32\linewidth}
        \centering
        \includegraphics[scale=0.24]{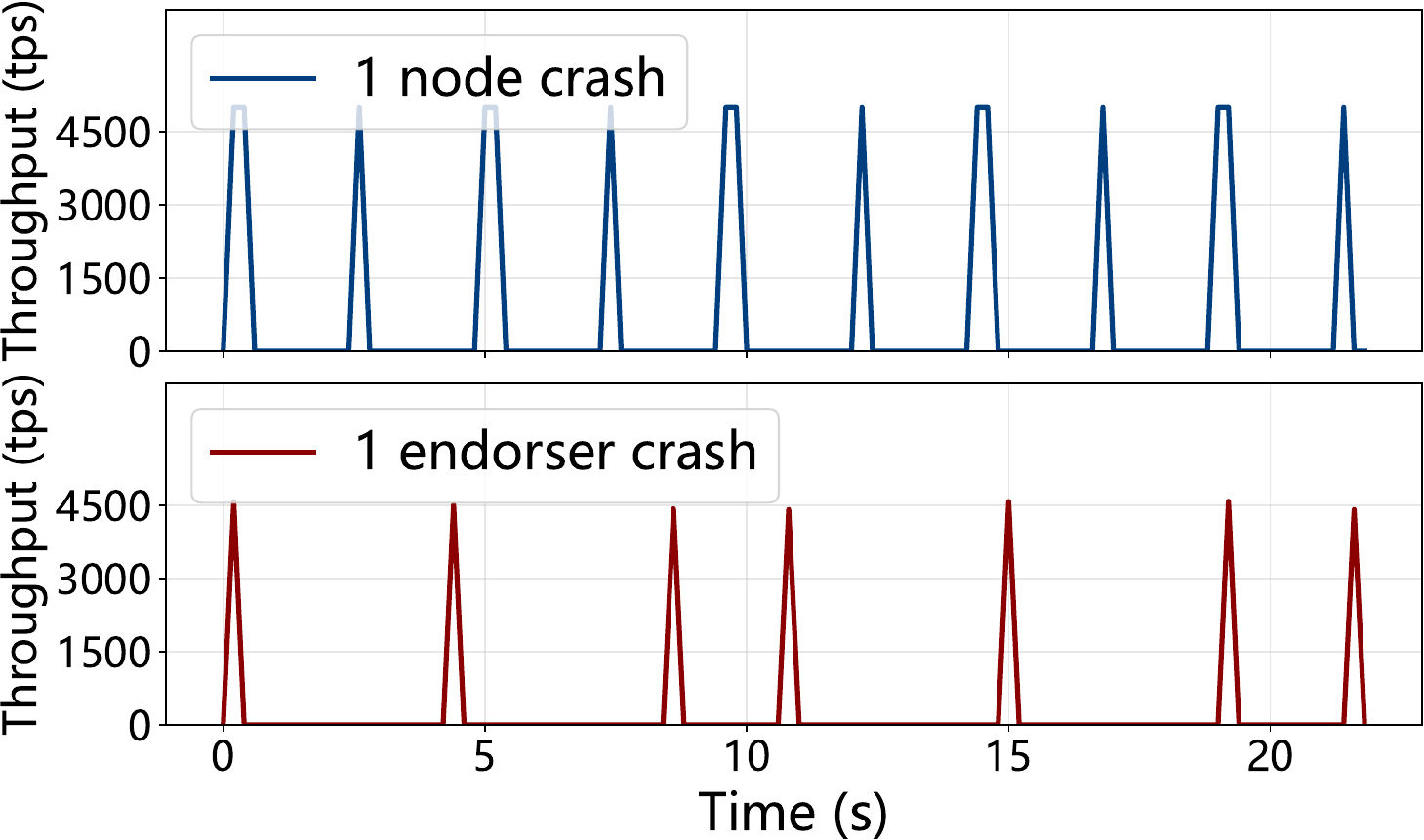}
          \caption{Throughput when one node or endorser failed ($n=4$).}
        \label{fig:timeout}
    \end{minipage}
    \hfill
    \begin{minipage}{0.32\linewidth}
       \centering
        \includegraphics[scale=0.18]{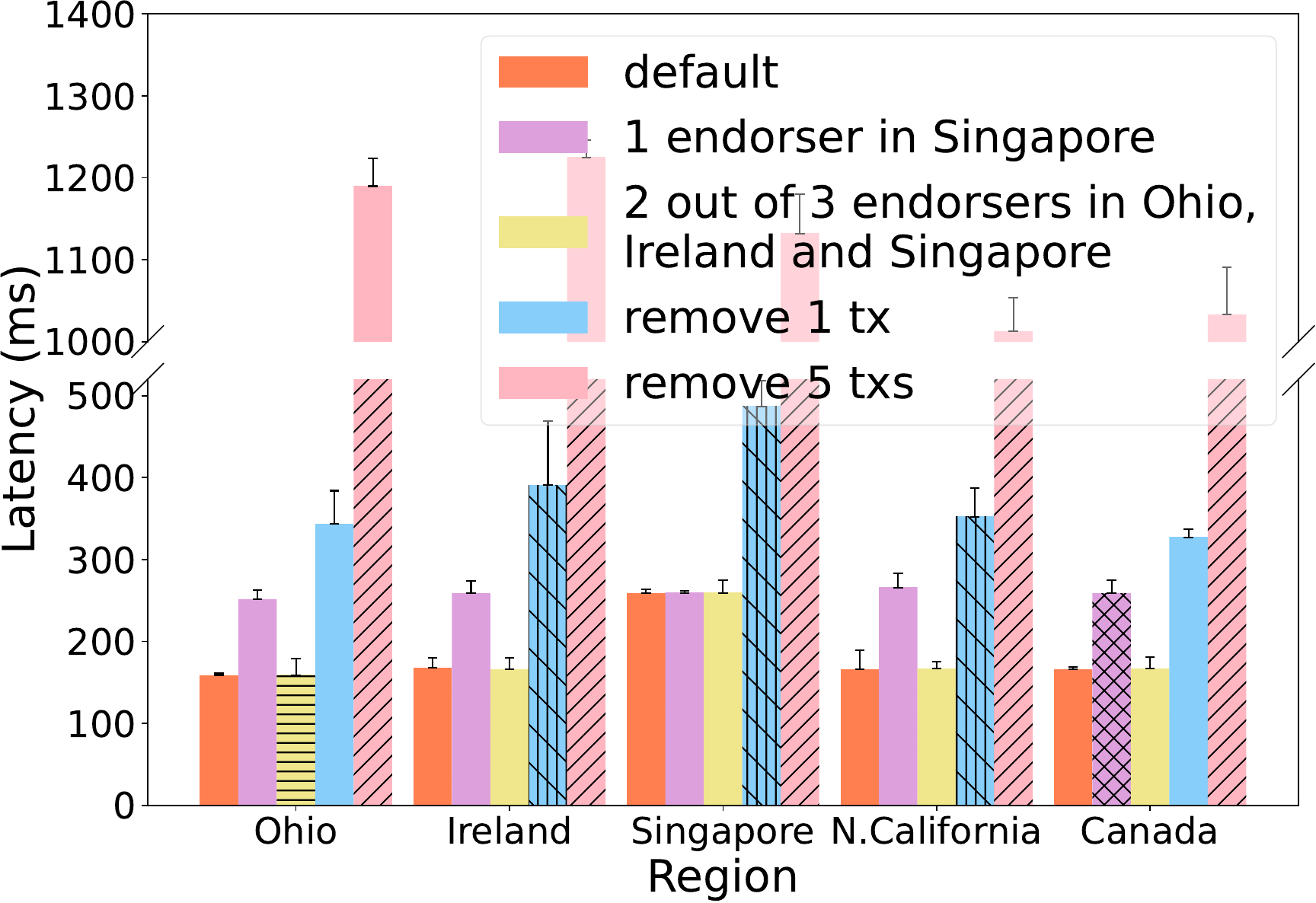}
        \caption{Average and $99^{th}$ percentile latency of \sys in WANs ($n=10$).}
        \label{fig:wan}
    \end{minipage}
    \vspace{-15pt}
\end{figure*}

\vspace{-5pt} 
\subsection{Transaction removal}

As endorsers can explicitly oppose a transaction via \prevote{} messages (i.e., rapid removal), we first evaluate \sys under this situation with $n=4$.
We use the default endorsement policy (i.e., any two-thirds of nodes), so a transaction can be rapidly removed once two nodes oppose it. 
In each round, endorsers oppose one transaction, leading to its removal and the re-execution of the remaining transactions.
We further distinguish two cases: whether the re-execution optimization is enabled.
The workload is USDT transfers unless otherwise stated.
The results are depicted in Figure~\ref{fig:fastremove}.
As more transactions are removed, \sys requires additional rounds for re-execution and re-endorsement, badly impacting its peak throughput.
This overhead includes both additional message exchanges and repeated execution.
With the re-execution optimization enabled, only causally dependent transactions are re-executed, resulting in a noticeable improvement over the unoptimized case.
We further examine the re-execution overhead under different workloads. 
The overhead is more pronounced when all transactions access the same accounts (i.e., the all-conflict case) and decreases when transactions are conflict-free.
In the second case, the overhead stems primarily from additional messaging.

We then investigate how \sys behaves when a single node or endorser crashes.
Upon the expiration of $tmr_{V,r}$, the remaining nodes suggest that all transactions requiring the specific endorsement be removed.
The evolution of the throughput in a four-node setup is shown in Figure~\ref{fig:timeout}.
Even when the crashed node is not the endorser, \sys no longer sustains continuous throughput.
This is because \sys and Tendermint proactively rotate the primary at each block height, which is essential for fair block generation.
Whenever the crashed node is selected as the primary, it is unable to drive consensus, causing the protocol to stall until a timeout occurs.
With a failed endorser, the throughput ``valleys'' are sometimes wider than those with a correct endorser.
This is because, once the endorser becomes the primary, \sys requires one round to replace it and another to remove pending transactions, with both rounds triggered by timeout events.
The results highlight the need to make endorsers themselves more decentralized so as to ensure continuous service.

\vspace{-5pt}
\subsection{Latency in WANs}

We next evaluate both protocols over WAN deployments spanning five geographic regions.
Each region hosts two nodes (i.e., $n=10$) and the workload is USDT transfers.
In this setting, neither protocol reaches saturation.
Figure~\ref{fig:wan} depicts the results.
Because \eov's latencies exceed two seconds, we omit them from the figure.
The latency gap between \sys and \eov widens further, since \eov incurs larger delays in WANs when retrying aborted transactions.
The impact of endorsement policies is also exacerbated.
As the single endorser is located in Singapore,  other nodes must wait its \prevote{} message even if they are closer and constitute a quorum.
When endorsers are distributed more widely (2 out of 3), the latency impact diminishes because endorsers are closer to other nodes.
We further measure the latencies of \sys when some transactions are rapidly removed.
The latencies increase roughly in proportion to the number of rounds required for transaction removal, amplifying the latency overhead caused by additional message exchanges.

\vspace{-5pt}
\section{Related work}
\vspace{-5pt}
\subsection{Enhancements of Fabric}

Numerous works have investigated Fabric's performance and scalability.
FastFabric~\cite{fastFabric} and SmartFabric~\cite{sparsepeer} accelerate Fabric by removing redundant validation work and parallelizing transaction processing.
FabricCRDT~\cite{fabricCRDT} incorporates conflict-free replicated datatypes~\cite{crdt} to improve parallelism and reduce abort rates.
While these works are largely orthogonal to \sys, certain ideas can be further leveraged.

Chacko \emph{et~al.}~\cite{whyfail} 
found that transaction failures in Fabric can exceed 40\% in realistic scenarios due to concurrency conflicts.
Fabric++~\cite{fabric++} improves Fabric's throughput by reducing unnecessary MVCC-based transaction aborts.
Ruan \emph{et~al.}~\cite{frabirsharp} 
demonstrated that Fabric and Fabric++ enforce a stronger-than-necessary serializability constraint, causing excessive transaction aborts.
They further proposed FabricSharp and FastFabricSharp, which introduce a fine-grained transaction-reordering technique that avoids unnecessary aborts.
ConChain~\cite{ConChain} combines a dependency manager with parallel ordering to minimize conflicts.
Kaul \emph{et~al.}~\cite{aware} proposed a dependency-aware execution model that detects transaction conflicts early.
These approaches can mitigate the problem caused by unnecessary aborts, whereas \sys completely eliminates conflict-induced aborts.

XOX Fabric~\cite{xox} and HTFabric~\cite{htfabric} employ a re-execution mechanism to avoid repeated transaction aborts in high-contention scenarios, dramatically reducing re-execution latency.
However, by doing so, re-executed transactions no longer meet the requirement of being endorsed by specific peers, whereas in \sys the execution result of every committed transaction is properly endorsed.

\vspace{-10pt}
\subsection{Improvement on (classical) BFT protocols}

\noindent\textbf{Parallel execution.}
Bidl~\cite{bidl}, as a permissioned blockchain designed for datacenter networks, achieves low latency and high throughput by aggressively decoupling execution from consensus and parallelizing contract execution.
Block-STM~\cite{blockstm} uses optimistic Software Transactional Memory (STM) combined with multi-versioned data structures to exploit parallelism while preserving a predetermined transaction order.
Vegeta~\cite{vegeta} allows multiple proposers to simulate transactions before consensus, enabling all nodes to replay them efficiently.
ParallelEVM~\cite{parallelevm} introduces an operation-level concurrency control algorithm for Ethereum, allowing only the specific operations involved in a conflict to be re-executed, rather than aborting entire transactions.
These approaches are amenable to further integration with \sys.
 
\noindent\textbf{Scalability.}
Blockchain or BFT systems primarily rely on sharding~\cite{elastico,dynamicsharding,reducecrossshard,Basil} or layer-2 solutions~\cite{arbitrum,formalzkrollup,securesequencer} to scale.
For instance, Basil~\cite{Basil} is a leaderless transactional key-value store that prevents redundant (cross-shard) coordination on concurrency control, replication and two-phase commit.
Arbitrum~\cite{arbitrum} enables layer-2 smart contracts by executing them off-chain with only minimal on-chain verification.
When paired with a scaling solution, \sys can further mitigate delays caused by unendorsed transactions, since irrelevant transactions reside in different shards or off-chain contracts.
We leave this promising direction as future work.

\vspace{-5pt}
\subsection{Regulation on blockchain}

Bodo and Filippi~\cite{trustincontext} argued that while DeFi presents itself as ``trustless'', it still relies on multiple social and institutional layers of trust such as developers, auditors and governance participants~\cite{blockgovernance}.
Liu \emph{et al.}~\cite{evasion} analyzed 957 days of Ethereum activity and showed that, even with addresses sanctions, malicious actors can still evade enforcement at scale.
Flexible endorsement (as in Fabric and \sys) offers a proactive way to prevent potential misuse.
For instance, the authors of~\cite{offramps} suggested that blockchain provenance can be used to block illicit funds at ``off-ramps'' during crypto-to-fiat conversions, where authorities can enforce regulations with the help of endorsement requirements.
Flexible endorsement can also be regarded as complementary to on-chain governance~\cite{blockgovernance}, where the final result must be endorsed by specific players.

As existing anti-money laundering (AML) tools struggle to keep pace with blockchain's evolving laundering techniques, machine-learning-based approaches~\cite{mlmoneylauderydetect,combatingmoneylaundering,deepdiveaml} are increasingly used to detect and identify illicit transactions.
In this scenario, detection programs can run in the background, and once a suspicious address or account is identified, \sys can block any subsequent transactions involving it.

\vspace{-5pt}
\section{Conclusion}
\vspace{-5pt}

To integrate regulation into order-execute blockchains in a practical manner, we proposed \sys, a conflict-free BFT protocol with built-in support for customizable endorsement policies.
\sys in the best case incurs no additional overhead compared to Tendermint.
We believe our approach offers a promising path for the adoption of decentralized systems in traditional financial services, by combining the predominant blockchain paradigm with real-time proactive regulation.
More broadly, our approach enriches the functionality and semantics of protocols designed to solve the original consensus problem.
Other application requirements that demand iterative solutions can be accommodated in a similar manner.
\section*{Acknowledgments}

This work was supported in part by the Shanghai Action Plan for Science, Technology and Innovation (grant no. 24BC3201300) and the National Natural Science Foundation of China (grant no. 62372293).


\bibliographystyle{plain}
\bibliography{bibliography}

\appendix

\section{Correctness proof}
\label{sec:correctness}

\subsection{Safety}

\begin{theorem}{(Safety)}
If node $i$ commits proposal $v$ in round $r$ and node $j$ commits proposal $v'$ in round $r'\geq r$, then $v=v'$ and $v$ is properly endorsed. 
\end{theorem}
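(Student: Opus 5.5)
We split the statement into its two halves and argue each by reduction to the standard Tendermint locking argument, adapted to the fact that in \sys a node locks (and updates $validValue_p$) only when the value is also properly endorsed.

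\textbf{Proper endorsement of committed values.} This half is immediate from the commit rule in Phase III. A correct node commits $v$ in round $r$ only upon $2f+1$ \precommit{} messages for $v$ none of which suggests a removal. Among these, at least $f+1$ come from correct nodes. A correct node sends a non-removal \precommit{} for $v$ only in two cases: on the path where it has received $2f+1$ \prevote{} messages \emph{and} sufficient endorsements for $v$, or on the rapid-removal path where every transaction is endorsed. A removal-free message on the rapid-removal path again means every transaction is properly endorsed. The endorsements in question are the aggregated \prevote{} endorsements of round $r$, or of $rr$ when $vr\neq-1$, verified at proposal time. Endorsements are signed, so Byzantine nodes cannot forge them. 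Hence $v$ is properly endorsed.

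\textbf{Agreement, same round.} If $r'=r$, the two commit quorums of size $2f+1$ intersect in at least $f+1$ nodes, hence in a correct node. A correct node sends at most one non-nil \precommit{} per round; the rule that a node broadcasts a $nil$ \precommit{} upon seeing two differing proposals only strengthens this. Therefore $v=v'$.

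\textbf{Agreement, later rounds.} For $r'>r$, we prove by induction on rounds $s>r$ the invariant: at least $f+1$ correct nodes have $lockedValue=v$ with $lockedRound\geq r$, and no value other than $v$ obtains $2f+1$ \prevote{} messages with \texttt{CON} true in round $s$.
\begin{itemize}
\item \emph{Base case.} The $f+1$ correct precommitters of round $r$ locked on $v$ in round $r$. A non-nil, removal-free \precommit{} is sent only after locking.
\item \emph{Inductive step.} A correct locked node casts a \texttt{CON}-true \prevote{} for $v''\neq v$ only if $v''$ carries a QC from a round higher than its $lockedRound$, yet still below $s$. By the inductive hypothesis, no such QC for $v''\neq v$ exists. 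So at most $2f$ \texttt{CON}-true \prevote{} messages can go to $v''$, it cannot be quorum-certified, and no correct node relocks on it.
\end{itemize}
Consequently, in round $r'$ only $v$ can gather the $2f+1$ \prevote{} messages that must precede any non-nil \precommit{}, so $v'=v$.

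\textbf{Main obstacle.} The step needing the most care is confirming that the new \sys mechanisms do not weaken this locking argument.
\begin{itemize}
\item \texttt{CON}-false \prevote{} messages carry endorsements only and are never counted toward a QC.
\item The $refRound$/$refValue$ acceptance path (case (2), $vr=-1$, $rr\neq-1$) is still subject to the check ``$lockedValue_p=nil$, $v=lockedValue_p$, or a higher-round QC''. It therefore cannot bypass a lock.
\item Conflicting double \prevote{} messages from Byzantine endorsers affect only the endorsement interpretation, not the QC count.
\item A \precommit{} for $v$ that suggests a removal is sent without locking. We must check that such messages cannot complete a removal-free commit quorum. They cannot, since the commit rule requires all $2f+1$ messages to be removal-free.
\end{itemize}
I would verify each of these against the pseudocode in Appendix~\ref{sec:pseudocode} line by line before concluding.
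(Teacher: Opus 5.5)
Your proof follows the paper's argument: safety is inherited from Tendermint's locking, because a correct node locks, and hence sends a removal-free \precommit{}, only after collecting endorsements for $v$; you also write out the same-round quorum intersection and the cross-round lock induction that the paper only sketches. Two points to tidy. First, your base case needs every removal-free \precommit{} to come from the locking rule, yet your endorsement half lists the rapid-removal rule of Algorithm~\ref{alg:fast} (which never updates $lockedValue_p$) as a removal-free path, so you must assume, as the paper's proof silently does, that it always carries a non-empty removal set. Second, because the pseudocode accepts $lockedRound_p\le vr$ rather than strictly higher, your induction must also exclude a competing QC in round $r$ itself, which follows from quorum intersection with the round-$r$ QC for $v$.
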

\begin{proof}
The safety property of \sys is a direct consequence of Tendermint's locking mechanism,
as a proposal can be locked only after \emph{its endorsements have been collected}.
The locking mechanism also ensures that proposal $v$ can be committed only when $v$ is locked by a quorum of $2f+1$ nodes, thereby making $v$ permanent across subsequent rounds.
If another node $j$ commits $v'$ in a later round $r'\geq r$, a quorum of nodes must approve $v'$ in some later round between $r$ and $r'$.
As only $v$ satisfies the unlocking condition, it holds that $v=v'$.
\end{proof}

\subsection{Liveness}

We first give the proofs of Invariant~\ref{inv:reviewed} and Lemma~\ref{lem:reviewed}, which together lead to the proof of Theorem~\ref{the:termination}.
We assume \textbf{no proposal was properly endorsed before round $r$}, otherwise we can skip Invariant~\ref{inv:reviewed} and Lemma~\ref{lem:reviewed}.

\ReviewInvariant*
\begin{proof}
If transaction $tx$ in $v$ is not properly endorsed by the end of the \prevote{} phase in round $r$, 
or certified for rapid removal in the \prevote{} phase in round $r$,
correct nodes will suggest removing $tx$ via \precommit{} messages.
Thus, at least $2f+1$ nodes will suggest its removal, making $tx$ removable in subsequent rounds.
If no such removable transaction exists, then for every transaction in $v$, at least one correct node must have observed its endorsement proof during the \prevote{} phase.
Due to the gossip communication model, every correct node will eventually receive the proof of all transactions, making $v$ properly endorsed.
\end{proof}

\begin{lemma}
\label{lem:reviewed}
Assume every correct node enters round $r$ after GST, and the proposer of round $r+1$ is a correct node and proposes $v$.
Then, $v$ is \reviewed in round $r+1$.
\end{lemma}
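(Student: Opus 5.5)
We argue that under the stated assumptions every correct node passes the validity checks of Phase I for $v$, sends a \prevote{} for $v$ with \texttt{CON} set to true, and later sends a non-$nil$ \precommit{} for $v$ (possibly suggesting a removal), all before its round-$(r+1)$ timers expire. Then $n-f\geq 2f+1$ correct nodes precommit $v$, so some correct node receives $n-f$ \precommit{} messages for $v$, and $v$ is \reviewed in round $r+1$ by Definition~\ref{def:reviewed}. Throughout we use the standing assumption that no proposal was properly endorsed before round $r$. Consequently $validValue_*=nil$, $validRound_*=-1$, and no correct node is locked, since locks are only taken on properly endorsed values. The Tendermint locking check in Phase I is therefore trivially satisfied, and we only need the \sys-specific checks on $rr$.

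\textbf{Step 1 (the proposal passes verification).} The correct proposer $q$ proposes either $refValue_q$ with $rr=refRound_q$, or a fresh value with $rr=-1$. In the first case, $q$ set $\{refRound_q,refValue_q\}$ only after receiving $2f+1$ \precommit{} messages for $refValue_q$ in round $rr$, with at least $f+1$ suggesting each removed transaction. By the gossip assumption, $q$ has also relayed these messages. Since everything happens after GST, every correct node receives them within $\Delta$ of the proposal, so condition (2) of Phase I holds at every correct node before $tmr_{P,r+1}$ expires.

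The fresh-proposal case is the main obstacle. A correct node $p$ whose local $refRound_p\neq -1$ would reject a proposal carrying $rr=-1$. We handle this with gossip: every node entering round $r$ after GST forwards the \precommit{} messages that caused any update of $refRound_p$, and $q$ enters round $r+1$ only after round $r$ ends. We therefore argue that by the time $q$ proposes, $q$ has received every certificate any correct node used, so $refRound_q\geq refRound_p$ for all correct $p$ in the relevant sense. Hence $rr=-1$ only if no correct node holds a non-$-1$ reference round. If this alignment fails, we fall back on timer lengths. The $tmr_{C,r}$ wait ensures $q$ receives all round-$r$ \precommit{} messages sent by correct nodes before proposing.

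\textbf{Step 2 (prevote and precommit).} Each correct node executes $v$ and broadcasts a \prevote{} with \texttt{CON} true. Within $\Delta$, every correct node has $2f+1$ \prevote{} messages for $v$ and starts $tmr_{V,r+1}$ (set to $2\Delta$). We then split into two cases. Either the node collects sufficient endorsements and precommits $v$ immediately, or $tmr_{V,r+1}$ expires and case (1) of Phase II applies, since $2f+1$ prevotes for $v$ are present. In that case it precommits $v$ while suggesting a removal.

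The rule about two differing proposals cannot trigger a $nil$ \precommit{}, because a correct proposer sends only one proposal. Thus all $n-f$ correct nodes precommit $v$, and after one more $\Delta$ each correct node holds $n-f$ such \precommit{} messages, before any $tmr_{C,r+1}$ could move it to round $r+2$. This gives that $v$ is \reviewed in round $r+1$.
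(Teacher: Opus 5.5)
Your overall skeleton matches the paper's. Under the standing assumption no correct node is locked and $validValue_*=nil$, so acceptance hinges only on the $rr$ check, after which all correct nodes prevote and precommit $v$ within the timers. Your remark that the conflicting-proposal rule cannot fire under a correct proposer is a useful addition.

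The gap is the fresh-proposal case, which you correctly flag as the crux but do not close.
\begin{itemize}
\item The claim that $q$ ``has received every certificate any correct node used'' is asserted, not derived.
\item The premise that $q$ enters round $r+1$ only after round $r$ ends does not give it. $q$ can jump to round $r+1$ the moment it holds $2f+1$ round-$r$ \precommit{} messages for some value, before certificates held by other correct nodes have reached it.
\item ``If this alignment fails, we fall back on timer lengths'' is not an argument.
\item $refRound_q\geq refRound_p$ is the wrong target. $refRound$ tracks the examined value with the fewest transactions, not the latest round. It is also stronger than needed: $VerifyReference(rr,v)$ checks only the certificate carried by the proposal, never the receiver's local $refRound_p$. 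It suffices to show that $refValue_q\neq nil$ whenever some correct node holds a reference.
\end{itemize}

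The paper obtains this by splitting on how $q$ entered round $r+1$.
\begin{itemize}
\item If $q$ entered upon $2f+1$ round-$r$ \precommit{} messages for some value, it has itself seen a value examined in round $r$. So $refValue_q\neq nil$, and it proposes with a reference. Every correct node verifies that reference once $q$'s relayed messages arrive, whatever it holds locally.
\item If $q$ entered because $tmr_{C,r}$ expired, the paper argues that no value was examined in round $r$. With no correct node locked and after GST, once some correct node sends a non-$nil$ \precommit{} for $w$, every correct node precommits $w$ before its $tmr_{V,r}$ expires. Then $q$ would have been in the first sub-case. A value examined in an earlier round reaches $q$ by gossip before $tmr_{C,r}$ expires, since all correct nodes entered round $r$ after GST.
\end{itemize}

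Your fallback that ``$q$ receives all round-$r$ \precommit{} messages sent by correct nodes'' is not enough on its own. A value examined at a correct node $p$ may rest on up to $f$ Byzantine \precommit{} messages that $q$ never receives. You need the step that all correct nodes precommit the same value. With this case split supplied, your Step 2 goes through as written.
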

\begin{proof}
We first focus on proving that $v$ can be approved by every correct node in round $r+1$; that is, every correct node will send a \prevote{} message in round $r+1$ for $v$. 
We discuss two cases.

\noindent\textbf{Case I.}
If some proposal was \reviewed before round $r+1$, then
the proposer of round $r+1$, denoted as node $q$, should reference an \reviewed proposal $v'$ and its corresponding round $r'$.
We further distinguish two sub-cases:
Proposer $q$ enters round $r+1$ (1) when $tmr_{C,r}$ expires or (2) when $q$ receives $2f+1$ \precommit{} messages of round $r$ for some value.
In the former sub-case, no proposal should be \reviewed in round $r$, because after GST, $q$ must have received every non-nil \precommit{} message sent by correct nodes before entering round $r+1$,
and because if no correct node has locked on any value (which is true because we assume no proposal is yet properly endorsed), then once any correct node sends a non-nil \precommit{} message in round $r$, every correct node will send the same \precommit{} before its $tmr_{V,r}$ expires.
Since some \reviewed proposal exists before round $r$, by $tmr_{C,r}$, $q$ should have received an \reviewed proposal before $tmr_{C,r}$ expires. 
In the latter sub-case, round $r$ and its \reviewed value can be referenced by $q$.
Thus, $q$ should reference an \reviewed value, say in round $r'$, upon proposing $v$ in round $r+1$.

We argue that $v$ can be approved by every correct node, provided that the node has not locked on any value (which is true because we assume no proposal is yet properly endorsed) and received all \precommit{} messages of round $r'$ propagated by $q$.

\noindent\textbf{Case II.}
If no proposal was \reviewed before round $r+1$,
then the $refRound_*$ of every correct node should be $-1$.
Thus, $v$ can be approved by every correct node.

After GST, every correct node receives $v$ (and \precommit{} messages of the reference round) before $tmr_{P,r+1}$ expires, thereby approving $v$ and further proceeding to the \precommit{} phase. 
Finally, $v$ is \reviewed in round $r+1$.
\end{proof}

\begin{theorem}{(Termination)}
\label{the:termination}
After GST, some proposal can be committed eventually. 
\end{theorem}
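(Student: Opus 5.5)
The plan is to combine Invariant~\ref{inv:reviewed}, Lemma~\ref{lem:reviewed}, and the termination of vanilla Tendermint, using a potential-function argument on the length of $refValue_*$. I split into two phases according to whether some proposal has become properly endorsed.

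\textbf{Phase 1: reaching a properly endorsed proposal.} Assume no proposal has been properly endorsed. Under the round-robin primary rotation, after GST there are infinitely many rounds $r+1$ whose proposer is correct, with all correct nodes having entered round $r$ after GST. By Lemma~\ref{lem:reviewed}, the proposal $v$ of each such round is \reviewed{} in round $r+1$. By Invariant~\ref{inv:reviewed}, either $v$ is properly endorsed, and we go to Phase~2, or at least one transaction of $v$ is removable with $f+1$ supporting \precommit{} messages from round $r+1$.

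In the second case, every correct node that receives the $2f+1$ \precommit{} messages updates $\{refRound_*, refValue_*\}$ to $\{r+1, v\}$, as long as $length(v)$ is below its current reference length. The next correct proposer then proposes a strictly shorter value: either the minimum-length \reviewed{} value it knows with the removable transaction dropped, or a still shorter one.

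The potential is the minimum length of a value that is \reviewed{} and known to all correct nodes. Gossip makes the relevant \precommit{} messages reach every correct node within $\Delta$, so knowledge of a shorter \reviewed{} value propagates. The potential is a non-negative integer, bounded by the initial proposal length. It strictly decreases at every correct-proposer round that does not yield a properly endorsed value, and it never increases, because proposals may only remove transactions and $refValue_p$ only shrinks. Hence after finitely many correct-proposer rounds some \reviewed{} value becomes properly endorsed. Failing that, the empty value is reached, and it is trivially properly endorsed.

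\textbf{Phase 2: committing once something is properly endorsed.} Once some QC value is properly endorsed, gossip ensures that every correct node eventually sets $validValue_*$, $validRound_*$ and $refRound_*$ accordingly. From then on the protocol behaves as standard Tendermint: correct proposers propose $validValue$ with $vr$ and $rr$, and correct nodes verify the endorsements from round $rr$. Locks are only taken on properly endorsed values, so Tendermint's locking and unlock rules apply unchanged. I then invoke Tendermint's termination argument: after GST, the first round with a correct proposer whose $validRound$ is at least every correct node's $lockedRound$ gathers $2f+1$ \prevote{} and \precommit{} messages, none of which suggest removal, so the value is committed.

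\textbf{Main obstacle.} The hardest step is showing that the potential strictly decreases. A correct proposer must reference a shortest \reviewed{} value, and all correct nodes must accept the shorter proposal. That requires them to have received the $f+1$ removal-suggesting \precommit{} messages of the reference round before $tmr_{P}$ expires. I would handle this exactly as in Case~I of Lemma~\ref{lem:reviewed}: the proposer gossips those messages together with its proposal, and after GST they arrive within $\Delta$.

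A second concern is malicious proposers in intermediate rounds. I would argue they cannot increase the potential: correct nodes reject proposals that add transactions or lack removal certificates, and $refValue$ is updated only to shorter values.
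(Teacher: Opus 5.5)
Your proposal is correct and follows the same route as the paper. Lemma~\ref{lem:reviewed} and Invariant~\ref{inv:reviewed} force each correct proposer after GST to strictly shorten the shortest examined value until some (possibly empty) proposal is properly endorsed, after which the protocol runs as standard Tendermint and its liveness yields a commit; your explicit potential (the minimum length of an examined value known to all correct nodes) simply makes rigorous the paper's one-line claim that each correct proposer removes at least one transaction from the minimal examined proposal.
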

\begin{proof}
Given Invariant~\ref{inv:reviewed} and Lemma~\ref{lem:reviewed}, and since each correct proposer references an \reviewed proposal with the minimum number of transactions and removes at least one transaction from it, eventually, there will exist a proposal that is properly endorsed (maybe an empty one).
From that point on, each correct proposer only proposes an endorsed proposal, which will be accepted by every correct node (unless it is locked on another endorsed value).
In general, the protocol enters a mode that is equivalent to the standard Tendermint.
Thus, by the liveness property of Tendermint, every correct node eventually commits a proposal.
\end{proof}

We then give the proof of Invariant~\ref{inv:censorship}, which is critical to censorship resistance.
\censorship*
\begin{proof}
Assume node $q$ is the proposer in round $r$.
We first argue that after GST, if $q$ proposed both $v$ and $v'$ in round $r$, causing two correct nodes to send \prevote{} messages for $v$ and $v'$ ($v\neq v'$), respectively, then no correct node suggests removal in round $r$.

Let $t$ denote the earliest time at which a correct node, say node $j$, sent a \prevote{} message.
Without loss of generality, further assume correct node $j$ sent a \prevote{} message for $v$ and correct node $k$ sent a \prevote{} message for $v'$.
Node $j$ also propagated \propose{} $v$ before time $t$ due to the gossip communication model.
Thus, after GST, node $k$ must receive \propose{} $v'$ before $t+\Delta$, otherwise it should first receive $v$.
Node $k$ thus propagated $v'$ before $t+\Delta$.
Every correct node received $v'$ before time $t+2\Delta$ and $v$ before time $t+\Delta$, detecting conflicts before $t+2\Delta$.
Since transaction $tx$ cannot be rapidly removed, each correct node enters the \precommit{} phase either upon receiving sufficient endorsements and $2f+1$ \prevote{} messages for $v$, or when $tmr_{V,r}$ expires.
As node $j$ is the first correct node to send \prevote{} (at time $t$), no correct node will enter the \precommit{} phase before $t+2\Delta$, except those that have received enough endorsements for every transaction in $v$.
Obviously, they will not suggest any removal in the \precommit{} phase.
Thus, every correct node either sends a $nil$ \precommit{} upon receiving two conflicting proposals or sends a \precommit{} with no removal.

Assume $v$ is \reviewed.
As the above-mentioned case cannot lead to any removal, we further assume each correct node either sent a \prevote{} $v$ or a \prevote{} $nil$, but not a \prevote{} $v'\neq v$.
After GST, every correct node received $v$ before $t+\Delta$.
As mentioned in \S\ref{sec:threat}, endorsers still send a (new) \prevote{} message regardless of whether they have already sent \prevote{} $nil$ , thus other nodes receive their \prevote{} messages before time $t+2\Delta$.
That means, every correct node will receive all endorsements sent by correct nodes before entering the \precommit{} phase.
Thus, no correct node will suggest any removal in the \precommit{} phase.
\end{proof}



\begin{theorem}{(Liveness)}
After GST, any transaction $tx$ issued by a correct client will be committed, provided that a sufficient number of endorsers endorse it and it cannot be rapidly removed.
\end{theorem}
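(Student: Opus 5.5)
The plan is to combine the Termination theorem (Theorem~\ref{the:termination}) with Invariant~\ref{inv:censorship}: termination ensures that heights keep being decided, and the invariant ensures that, after GST, $tx$ is never removed from a proposal that contains it. Since $tx$ comes from a correct client, gossip delivers it to every correct node's mempool. Rotation of proposers across heights and rounds gives a correct proposer infinitely often after GST, and a correct proposer that makes a \emph{new} proposal includes pending mempool transactions. So there is some height at which $tx$ appears in a proposal, or $tx$ has already been committed at an earlier height. Fairness of the mempool (FIFO retrieval by correct proposers) gives that $tx$ is eventually placed in a proposal at some height $h$ after GST.

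Next I will follow height $h$ from the first round $r_0$ in which a proposal containing $tx$ appears. By the paper's rules, a transaction can leave a proposal only in one way: a later proposal references an \reviewed value $v'$ from round $rr$, and the removed transactions are exactly those for which at least $f+1$ \precommit{} messages of round $rr$ suggest removal. At least one of these $f+1$ messages comes from a correct node. Invariant~\ref{inv:censorship} says that, after GST, no correct node suggests removing $tx$ in any round, because $tx$ is endorsed by enough endorsers and cannot be rapidly removed. Hence $tx$ never gathers $f+1$ removal suggestions, and every value a correct node accepts through condition~(2) of the \propose{} verification still contains $tx$. Transactions are never added back, and proposals are only shortened by referencing. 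So, by induction over rounds, every proposal at height $h$ that correct nodes prevote descends from a value containing $tx$ and therefore contains $tx$. The exception is a fresh proposal made when $refRound=-1$ and $validRound=-1$; in that case I argue that a correct proposer still includes $tx$ from its mempool, and a faulty proposer only delays matters.

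Finally, by the Termination theorem some proposal $v$ is committed at height $h$. By Safety $v$ is properly endorsed. By the invariant above, $v$ contains $tx$, unless $tx$ never entered any proposal at height $h$, which the mempool argument excludes. Therefore $tx$ is committed.

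The main obstacle is the interaction with the pre-GST period and with fresh proposals. Removals suggested before GST are not covered by Invariant~\ref{inv:censorship}, and a faulty proposer can simply omit $tx$ from a new proposal. I would handle both by not fixing the height in advance. Instead I pick the first height, starting after GST and from a round with a correct proposer, at which $tx$ is proposed. Lemma~\ref{lem:reviewed} then guarantees that this proposal is \reviewed. If $tx$ is omitted or committed elsewhere, it stays in the mempools, and the argument repeats at the next height; the remaining difficulty is to argue that this repetition cannot continue forever.
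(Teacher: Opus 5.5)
Your outline is the paper's own argument: a correct node holding $tx$ eventually proposes it, Invariant~\ref{inv:censorship} forbids its removal, and Theorem~\ref{the:termination} yields a commit. Your induction through the $f+1$ removal threshold is a faithful unpacking of the paper's ``$tx$ will not be removed in any round.'' The paper's three-line proof tacitly assumes that the value committed at that height descends from the correct proposer's block. You rightly see this is not automatic, but you leave it open (``the remaining difficulty is to argue that this repetition cannot continue forever''), and the remedy you sketch does not work. A correct proposer in a round $k>0$ makes a fresh proposal only if $validValue_q=nil$ and $refValue_q=nil$. Once it knows of a value \reviewed earlier in the height, it proposes $validValue_q$ or $Extract(refRound_q,refValue_q)$ instead. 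Suppose a faulty proposer of an earlier round got a fresh value without $tx$ \reviewed---or later gets a shorter one \reviewed, which replaces $refValue_p$ under the $length$ rule. Then the correct proposer faithfully propagates it, and $tx$ is lost for the whole height. For the same reason, ``a faulty proposer only delays matters'' is unjustified. Moving to the next height does not help by itself, since the same scenario can recur at every height.

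The missing idea is to choose the height so that one height suffices. The proposer rotates with the height, and Theorem~\ref{the:termination} keeps heights advancing. Hence there is a height $h$, entered by all correct nodes after GST, whose \emph{round-$0$} proposer is the correct node $p$ holding $tx$ (with $tx$ in its batch, as the paper also assumes). At the start of $h$ every correct node has $refRound_p=-1$ and no lock. So, as in Case~II of Lemma~\ref{lem:reviewed}, all correct nodes prevote $p$'s fresh value $v\ni tx$, and $v$ is committed or \reviewed in round $0$. By the timing argument of Case~I of Lemma~\ref{lem:reviewed}, every correct node holds the $2f+1$ round-$0$ precommits for $v$, hence has $refRound_p\neq -1$, before it acts in round $1$. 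From then on it rejects every proposal with $rr=-1$. By induction on rounds, every value that gathers $f+1$ correct prevotes at height $h$ is therefore obtained from $v$ by removals. This covers every value \reviewed, quorum-certified, or committed there. Each removal is backed by $f+1$ suggestions, hence by at least one correct suggestion, and Invariant~\ref{inv:censorship} rules this out for $tx$. Termination then commits a value containing $tx$ at height $h$ itself, and no repetition argument is needed.
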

\begin{proof}
As long as $tx$ is received by a correct node and added into its mempool, the node will eventually become a proposer and propose a block containing $tx$. 
After GST, and by Invariant~\ref{inv:censorship}, $tx$ will not be removed in any round.
By Theorem~\ref{the:termination}, $tx$ will be committed.
\end{proof}

\section{Pseudocode of \sys}
\label{sec:pseudocode}

We present the pseudocode of \sys in Algorithms~\ref{alg:sys},~\ref{alg:func} and~\ref{alg:fast}, which respectively describes the events, functions, and rapid-removal mechanism.
We highlight the modifications to Tendermint in \algemph{grey}.

\begin{algorithm*}[!ht]
  \caption{\sys code for node $p$ (Part I: events).}\label{alg:sys}
	\scriptsize
	\textbf{Init}: $h_p\leftarrow 0, round_p\leftarrow 0,step_p\in\{\propose,\prevote,\precommit\},decision_p[]\leftarrow nil,lockedValue_p\leftarrow nil,lockedRound_p\leftarrow -1, validValue_p\leftarrow nil,$ $validRound_p\leftarrow -1$, \algemph{$\refround\leftarrow -1,\txs\leftarrow nil$}
  \begin{algorithmic}[1]
  \State \textbf{upon} start \textbf{do} $startRound(0)$
  \State \textbf{upon} \msg{\proposal}{h_p,round_p,v,-1,\algemph{$rr$}} from \texttt{proposer}$(h_p,round_p)$ \textbf{while} $step_p=\propose$\algemph{$\wedge((rr=-1\wedge \refround=-1)\vee VerifyReference(rr,v))$} \textbf{do}
  \State \myindent \textbf{if} $valid(v)\wedge(lockedRound_p=-1\vee lockedValue_p=v)$ \textbf{then}
  \State \myindent\myindent \textbf{execute} transactions in $v$, \algemph{\textbf{verify} the results and \textbf{insert} the endorsed transactions into $D$}
  \State \myindent\myindent \textbf{broadcast} \msg{\prevote}{h_p,round_p,hash(v),\algemph{$D$}}
  \State \myindent \textbf{else}
  \State \myindent\myindent \textbf{broadcast} \msg{\prevote}{h_p,round_p,nil,\algemph{$nil$}}
  \State \myindent $step_p\leftarrow \prevote$
  \newline
  \State \textbf{upon} \msg{\proposal}{h_p,round_p,v,vr,\algemph{$rr$}} from \texttt{proposer}$(h_p,round_p)$ \textbf{AND} $2f+1$ \msg{prevote}{h_p,vr,hash(v),\algemph{$*$}} \textbf{while} $step_p=\propose\wedge vr\geq 0\wedge vr<round_p\wedge$\algemph{$VerifyEndorsement(rr,v)$} \textbf{do} \Comment when $vr\neq -1$, no re-execution or re-endorsement is needed
  \State \myindent \textbf{if} $valid(v)\wedge(lockedRound_p\leq vr\vee lockedValue_p=v)$  \textbf{then}
  \State \myindent\myindent \textbf{broadcast} \msg{\prevote}{h_p,round_p,hash(v),\algemph{$nil$}}
  \State \myindent \textbf{else}
  \State \myindent\myindent \textbf{broadcast} \msg{\prevote}{h_p,round_p,nil,\algemph{$nil$}}
  \State \myindent $step_p\leftarrow \prevote$
  \newline
  \State \textbf{upon} $2f+1$ \msg{\prevote}{h_p,round_p,*,\algemph{$*$}} \textbf{while} $step_p=\prevote$ for the first time \textbf{do}
  \State \myindent \textbf{schedule} $OnTimeoutPrevote(h_p,round_p)$ to be executed \textbf{after} $timeoutPrevote(round_p)$
  \newline
  \State \textbf{upon} \msg{\proposal}{h_p,round_p,v,*,\algemph{$rr$}} from \texttt{proposer}$(h_p,round_p)$ \textbf{AND} \algemph{at least} $2f+1$ \msg{\prevote}{h_p,round_p,hash(v),\algemph{$*$}} \textbf{while} $valid(v)\wedge step_p=\prevote$\algemph{$\wedge (VerifyEndorsement(rr,v)\vee VerifyEndorsement(round_p,v))$} for the first time \textbf{do}
  \State \myindent $lockedValue_p\leftarrow v$
  \State \myindent $lockedRound_p\leftarrow round_p$
  \State \myindent \textbf{broadcast} \msg{\precommit}{h_p,round_p,hash(v),\algemph{$nil$}}
  \State \myindent $step_p\leftarrow \precommit$
  \newline
  \State \textbf{upon} \msg{\proposal}{h_p,r,v,*,\algemph{$rr$}} from \texttt{proposer}$(h_p,r)$ \textbf{AND} \algemph{at least} $2f+1$ \msg{\prevote}{h_p,r,hash(v),\algemph{$*$}} \textbf{while} $valid(v)\wedge r>validRound_p$\algemph{$\wedge (VerifyEndorsement(rr,v)\vee VerifyEndorsement(r,v))$} \textbf{do}
  \State \myindent $validValue_p\leftarrow v$
  \State \myindent $validRound_p\leftarrow r$
  \State \myindent \algemph{\textbf{if} $VerifyEndorsement(r,v)$ \textbf{then}} \Comment $\refround$ is updated to the round in which $validValue_p$ is properly endorsed
  \State \myindent\myindent \algemph{$\refround\leftarrow r$}
  \State \myindent \algemph{\textbf{else}}
  \State \myindent\myindent \algemph{$\refround\leftarrow rr$}
  \newline
  \State \textbf{upon} $2f+1$ \msg{\prevote}{h_p,round_p,nil,\algemph{$nil$}} while $step_p=\prevote$ \textbf{do}
  \State \myindent \textbf{broadcast} \msg{\precommit}{h_p,round_p,nil,\algemph{$nil$}}
  \State \myindent $step_p\leftarrow \precommit$
  \newline
  \State \textbf{upon}  $2f+1$ \msg{\precommit}{h_p,round_p,*,\algemph{$*$}} for the first time \textbf{do}
  \State \myindent \textbf{schedule} $OnTimeoutPrecommit(h_p,round_p)$ to be executed \textbf{after} $timeoutPrecommit(round_p)$
  \newline
  \State \textbf{upon} \msg{\proposal}{h_p,r,v,*,\algemph{$*$}} from \texttt{proposer}$(h_p,r)$ \textbf{AND} $2f+1$ \msg{\precommit}{h_p,r,hash(v),\algemph{$nil$}} \textbf{while} $decision_p[h_p]=nil$ \textbf{do}
  \State \myindent \textbf{if} $valid(v)$ \textbf{then}
  \State \myindent\myindent $decision[h_p]\leftarrow v$
  \State \myindent\myindent $h_p\leftarrow h_p+1$
  \State \myindent\myindent reset $lockedRound_p,lockedValue_p,validRound_p,validValue_p$\algemph{$,\refround,\txs$} to initial values and empty message log
  \State \myindent\myindent $StartRound(0)$
  \newline
  \State \textbf{upon} $f+1$ \msg{\ensuremath{*}}{h_p,round,*,*} with $round>round_p$ \textbf{do}
  \State \myindent $StartRound(round)$
  \State \algemph{\textbf{upon} \msg{\proposal}{h_p,r,v,-1,*} from \texttt{proposer}$(h_p,r)$ \textbf{AND} $2f+1$ \msg{\precommit}{h_p,r,hash(v),\algemph{$*$}} \textbf{do}} \Comment $v$ is \reviewed in round $r$
  \State \myindent \algemph{\textbf{if} $\refround=-1\vee(validRound_p=-1\wedge length(v)<length(\txs))$ \textbf{then}}
  \State \myindent\myindent \algemph{$\refround\leftarrow r$}
  \State \myindent\myindent \algemph{$\txs\leftarrow v$}
  \State \myindent \algemph{\textbf{if} $round_p=r$ \textbf{then}}
  \State \myindent\myindent \algemph{$StartRound(round_p+1)$}
  \State \algemph{\textbf{upon} \msg{\proposal}{h_p,round_p,v,*,*} \textbf{AND} \msg{\proposal}{h_p,round_p,v',*,*}  from \texttt{proposer}$(h_p,r)$  \textbf{while} $v\neq v'\wedge step_p < \precommit$ \textbf{do}}
  \State \myindent \algemph{\textbf{broadcast} \msg{\precommit}{h_p,round_p,nil,nil}}
\algstore{sys}
\end{algorithmic}
\end{algorithm*}

\begin{algorithm*}
\scriptsize
\caption{\sys code for node $p$ (Part II: functions).}\label{alg:func}
\begin{algorithmic}[1]
\algrestore{sys}
  \State \textbf{Function} $startRound(round):$
  \State \myindent $round_p\leftarrow round$
  \State \myindent $step_p\leftarrow \propose$
  \State \myindent \textbf{if} $\texttt{proposer}(h_p,round_p)=p$ \textbf{then}
  \State \myindent\myindent \textbf{if} $validValue_p\neq nil$ \textbf{then}
  \State \myindent\myindent\myindent $proposal\leftarrow validValue_p$
  \State \myindent\myindent \textbf{else}
  \State \myindent\myindent\myindent \algemph{\textbf{if} $\txs\neq nil$ \textbf{then}}
  \State \myindent\myindent\myindent\myindent \algemph{$proposal\leftarrow Extract(\refround,\txs)$}
  \State \myindent\myindent\myindent \algemph{\textbf{else}}
  \State \myindent\myindent\myindent\myindent $proposal\leftarrow getValue()$\Comment get a new value
  \State \myindent\myindent \textbf{broadcast} \msg{\proposal}{h_p,round_p,proposal,validRound_p,\algemph{$\refround$}}
  \State \myindent \textbf{else}
  \State \myindent\myindent \textbf{schedule} $OnTimeoutPropose(h_p,round_p)$ to be executed \textbf{after} $timeoutPropose(round_p)$
  \newline  
  \State \textbf{Function} $OnTimeoutPropose(height,round):$
  \State \myindent \textbf{if} $height=h_p\wedge round=round_p\wedge step_p=\propose$ \textbf{then}
  \State \myindent\myindent \textbf{broadcast} \msg{\prevote}{h_p,round_p,nil,\algemph{$nil$}}
  \State \myindent\myindent $step_p\leftarrow \prevote$
  \newline
  \State \textbf{Function} $OnTimeoutPrevote(height,round):$
  \State \myindent \textbf{if} $height=h_p\wedge round=round_p\wedge step_p=\prevote$ \textbf{then}
  \State \myindent\myindent \algemph{\textbf{if} \msg{\proposal}{h_p,round_p,v,-1,rr} from \texttt{proposer}$(h_p,round_p)$ \textbf{AND} $2f+1$ \msg{\prevote}{h_p,round_p,hash(v),\algemph{$*$}} \textbf{then}}
  \State \myindent\myindent\myindent \algemph{\textbf{broadcast} \msg{\precommit}{h_p,round_p,hash(v),GetExcludedTx(r,v)}}
  \State \myindent\myindent \textbf{else}
  \State \myindent\myindent\myindent \textbf{broadcast} \msg{\precommit}{h_p,round_p,nil,\algemph{$nil$}}
  \State \myindent\myindent $step_p\leftarrow \precommit$
  \newline
  \State \textbf{Function} $OnTimeoutPrecommit(height,round):$
  \State \myindent \textbf{if} $height=h_p\wedge round=round_p$ \textbf{then}
  \State \myindent\myindent $StartRound(round_p+1)$
  \newline
  \State \algemph{\textbf{Function} $VerifyReference(r,v):$}
  \State \myindent \algemph{\textbf{return} $\exists v'$ and $2f+1$ \msg{\precommit}{h_p,r,hash(v'),*} such that $v\subset v'$ \textbf{and} for $tx\in v'-v$: at least $f+1$ \msg{\precommit}{h_p,r,hash(v'),*} exclude $tx$}
  \newline
  \State \algemph{\textbf{Function} $VerifyEndorsement(r,v):$}
  \State \myindent \algemph{\textbf{return} $\exists$ a set of \msg{\prevote}{h_p,r,hash(v),*} that makes $\forall tx\in v$ properly endorsed}
  \newline
  \State \algemph{\textbf{Function} $GetExcludedTx(r,v):$}
  \State \myindent \algemph{\textbf{return} $\{\forall tx\in v: tx \textnormal{ is not properly endorsed by } \msg{\prevote}{h_p,r,hash(v),*} \}$}
  \newline
  \State \algemph{\textbf{Function} $Extract(r,v):$}
  \State \myindent \algemph{\textbf{return} $v-\{\textnormal{the first } tx\in v: \textnormal{at least } f+1 \textnormal{ } \msg{\precommit}{h_p,r,hash(v),*} \textnormal{ exclude } tx\}$}
\algstore{sys}
\end{algorithmic}
\end{algorithm*}


\begin{algorithm*}
\scriptsize
\caption{\sys code for node $p$ (Part III: rapid removal).}\label{alg:fast}
\begin{algorithmic}[1]
\algrestore{sys}
  \State \textbf{upon} \msg{\proposal}{h_p,round_p,v,-1,rr} from \texttt{proposer}$(h_p,round_p)$ \textbf{AND} at least $2f+1$ \msg{\prevote}{h_p,round_p,hash(v),*} \textbf{while} $valid(v)\wedge step_p=\prevote\wedge \forall tx\in v: tx \text{ is properly endorsed or can be removed}$ \textbf{do}
  \State \myindent \textbf{broadcast} \msg{\precommit}{h_p,round_p,hash(v),GetExcludedTx(r,v)}
  \State \myindent $step_p\leftarrow \precommit$
  \newline
    \State \textbf{upon} \msg{\proposal}{h_p,r,v,*,$*$} from \texttt{proposer}$(h_p,round_p)$ \textbf{AND} $2f+1$ \msg{\precommit}{h_p,r,hash(v),$*$} \textbf{while} $decision_p[h_p]=nil$ \textbf{do}
\State \myindent \textbf{if} $round_p=r$ \textbf{then}  
\State \myindent\myindent $StartRound(round_p+1)$
\end{algorithmic}
\end{algorithm*}

\end{document}